\documentclass[conference]{IEEEtran}
\IEEEoverridecommandlockouts

\usepackage[no-math]{fontspec}
\defaultfontfeatures{Ligatures=TeX,Scale=MatchLowercase}
\usepackage{cite}
\usepackage{amsmath,amssymb,amsfonts}
\usepackage{algorithmic}
\usepackage{graphicx}
\usepackage{textcomp}
\usepackage{xcolor}
\usepackage{bm}
\usepackage{subfigure}
\usepackage{amsthm}
\usepackage[ruled]{algorithm2e}
\usepackage{booktabs}
\usepackage{multirow}
\usepackage{multicol}
\usepackage{makecell}
\usepackage{array}
\usepackage{cleveref}

\def\BibTeX{{\rm B\kern-.05em{\sc i\kern-.025em b}\kern-.08em
    T\kern-.1667em\lower.7ex\hbox{E}\kern-.125emX}}

\newtheorem{remark}{Remark}
\newtheorem{proposition}{Proposition}
\newtheorem{corollary}{Corollary}

\graphicspath{{./Figures/}}
 
\begin{document}

\title{Evaluating Power Flow Manifold from Local Data around a Single Operating Point via Geodesics\\
}

\author{Qirui Zheng, Dan Wu,~\IEEEmembership{Member,~IEEE,} Franz-Erich Wolter, and Sijia Geng,~\IEEEmembership{Member,~IEEE}
\thanks{Qirui Zheng and Dan Wu are with the School of Electrical and Electronic Engineering, Huazhong University of Science and Technology, Wuhan, China (e-mail: \{zqr\_2024, danwuhust\}@hust.edu.cn).}
\thanks{Franz-Erich Wolter is with the Department of Electrical Engineering and Computer Science,
Leibniz University Hannover, Hannover, Germany (e-mail: few@gdv.uni-hannover.de).}
\thanks{Sijia Geng is with the Department of Electrical and Computer Engineering,
Johns Hopkins University, Baltimore, MD 21218, USA (e-mail: sgeng@jhu.edu).}
\thanks{Correspondence: Dan Wu.}
}



\maketitle

\begin{abstract}
The widespread adoption of renewable energy poses a challenge in maintaining a feasible operating point in highly variable scenarios. 
This paper demonstrates that, within a feasible region of a power system that meets practical stability requirements, the power flow equations define a smooth bijection between nodal voltage phasors (angle and magnitude) and nodal active/reactive power injections. Based on this theoretical foundation, this paper proposes a data-based power flow evaluation method that can imply the associated power flow manifold from a limited number of data points around a single operating point. 
Using techniques from differential geometry and analytic functions, we represent geodesic curves in the associated power flow manifold as analytic functions at the initial point. 
Then, a special algebraic structure of the power flow problem is revealed and applied to reduce the computation of all higher-order partial derivatives to that of the first-order ones. 
Integrating these techniques yields the proposed data-based evaluation method, suggesting that a small number of local measurements around a single operating point is sufficient to imply the entire associated power flow manifold. Numerical cases with arbitrary directional variations are tested, certifying the efficacy of the proposed method.
\end{abstract}

\begin{IEEEkeywords}
Power flow manifold, geodesic equation, differential geometry, data-based method, analytic function
\end{IEEEkeywords}

\section{Introduction}
 Reducing carbon emissions in the electricity sector is a major measure to achieve carbon neutrality, which can be realized by replacing fossil-fuel-based generation with renewable energy. However, widespread installation of renewable generation can pose many challenges to the stable and reliable operations of modern power grids \cite{anderson2025ten}. One major challenge confronting power system operators is to ensure that the system maintains a feasible operating point in highly variable and intermittent renewable generation scenarios. 
Traditional routines based on a limited number of pre-examined typical operation modes and patterns are becoming less effective in evaluating increasingly diversified situations \cite{tri_sec}. New methods that can cover a wide, if not global, region of feasible power variations are therefore required. 

 The feasibility region, also known as the solvability region, of the power flow problem is usually defined as a set of power injection vectors for which the AC power flow equations admit at least one solution. It is related to the classical voltage stability problem, in which the singularity condition of the power flow Jacobian matrix describes the solvability boundary \cite{VS_margins}. 
Early attempts focused on the existence (sometimes uniqueness) conditions of the power flow solutions \cite{QV,radialS,sol,fes_sol}, leaving the estimation of the solvability region implicit. The breakthrough came with the reformulation of the power flow problem as a fixed-point problem using Banach contraction \cite{Linear_App}. Several explicit inner approximation methods based on various fixed-point theories were then proposed \cite{Inner_App,Feas_Sets,cui2019solvability}. They usually construct a polytopic or convex region that lies within the solvability region and serves as a certificate of feasibility. Although theoretically rigorous, existing inner approximation methods are model-sensitive and overly conservative, especially when estimating high-dimensional regions. 

 With the advancement of modern data science and machine learning, data-driven methods have been applied to power flow \cite{DL} and associated voltage stability assessment problems \cite{datadrive1_b,datadrive1_b2,datadrive2,datadrive1,datadrive3improve,TRANs_DL2,TRANs_DL3}. Typically, these problems are treated as a boundary-learning problem with a trained neural-network classifying if a given power injection profile is within the feasibility region \cite{DL}. The supervised learning paradigm is commonly adopted here. However, real-world power system measurements are hardly available to serve as an adequate sample pool. The scarcity of extreme events near the boundary further makes the real-world data set not representative of all scenarios. Hence, the majority existing data-driven methods feed synthetic or simulated power flow data into their training processes \cite{DL,datadrive1_b,datadrive1_b2,datadrive2,datadrive1,datadrive3improve,TRANs_DL2,TRANs_DL3}, which, in return, build trained models that are system-specific and intransferable to different network structures. Moreover, the poor interpretability of neural networks is responsible for the lack of guarantees or theoretical certificates, which prevents their operational deployment. 

To address the conservativeness of analytical methods, expensive data acquisitions, and poor interpretability of existing data-driven methods, in this article we extend our previous work \cite{WU2024110716} by leveraging differential geometry, power flow structure, and analytic function techniques to propose a novel data-based approach that is physically meaningful, geometrically interpreted, and can evaluate the associated power flow manifold from a limited number of measurements around a single operating point. 

The associated power flow manifold, which will be rigorously defined in Section~\ref{section:evaluating}, describes the region in which the system can continuously evolve to without traversing the singular boundary. This manifold is shown to be a proper length space that admits a shortest path between any pair of points. In the power engineering context, such a path is physically meaningful only when it does not touch the singular boundary (i.e., interior), which implies a geodesic. Hence, \textit{how to utilize geodesic curves} is at the heart of our proposed approach to evaluate the associated power flow manifold. 

A straightforward way is to solve the geodesic equation step-by-step, extending the local geodesic curve to consecutive neighborhoods. However, this method requires one to evaluate numerous points along the geodesic curve, which inevitably requires information on the metric tensor and the associated Christoffel symbols at points in those neighborhoods. With a known power system model, such information can be obtained from model evaluation. However, in a model-free context, sufficient measurements along the geodesic curve are indispensable to reproduce this information. Such a data request would easily become formidable when a high-dimensional power flow manifold is investigated with countless geodesic directions induced by stochastic variations of renewable energy. 

To address this challenge, in this paper, we propose a data-based method that can evaluate the associated power flow manifold from a limited amount of data acquired around a single operating point. The contents are arranged as follows. Section~\ref{section:model} introduces some basic concepts from differential geometry that are used in the paper and derives the analytic representation of geodesic curves. The intrinsic geometric properties and benign behaviors of analytic functions ensure a global representation of any geodesic curve from local information at the initial point. It substantially reduces the data requirement from a cloud of points globally over the manifold to a single point in the manifold. To further reduce the information required for higher-order derivatives, a thorough investigation of algebraic structures of the power flow map is conducted in Section~\ref{section:derivatives}, revealing that \textit{the first-order derivatives are all we need}. Upon integration of these techniques, the main theme of Section~\ref{section:evaluating} is to articulate the requirements for local measurements and the proposed data-based evaluation approach. The proposed method is numerically tested in Section~\ref{sec:simu} under various conditions and different data ranges. 

A key methodological takeaway is that, \textit{a limited amount of local measurements around a single operating point is sufficient to imply the associated power flow manifold}. We present proof-of-concept for the computation using numerical examples, showing that if we have a tiny neighborhood in phase-voltage space of a given operating point with $n$ linearly independent vectors, we can approximate the essential global geometric data of the aforementioned Riemannian manifold and its singular boundary:


1) Firstly, using the aforementioned point data, we can approximate the power flow Jacobian matrix.

2) Using the Jacobian matrix, we can approximate the partial derivatives of any order of the power flow map at the operating point. 
 
3) The higher-order derivatives of the power flow map will provide all partial derivatives of the Riemannian metric tensor and its inverse, and thus give us all the data describing Taylor series of any order for geodesics emanating from the operating point on the Riemannian manifold for any given initial direction.

4) Furthermore, using the Taylor approximation of this geodesic, we can compute a sufficiently precise (rational) Padé approximation of the geodesic, and using its related singular value, i.e., where the denominator of the rational Padé approximation is zero, to yield the position where the geodesic first hits the singular boundary of the bordered Riemannian manifold. 
 
The validity of the approach described above is proven by theoretical arguments and confirmed through numerical computations for benchmark power systems, where we observe that values computed via the approximations 1) - 4) agree with the respective model-based results in these sample cases.

\section{Analytic Representation of Geodesic Curves}\label{section:model}

In this section, we briefly introduce some necessary concepts in differential geometry used in the paper. A detailed introduction can be found in \cite{manifold,WU2024110716}. Readers familiar with those concepts can directly jump to Section~\ref{section:derivatives}. Throughout this paper, the Einstein summation convention\footnote{If an index appears twice in a term with one superscript and one subscript, it is summed over all possible values of that index. For example, $a^ib_i \triangleq \sum_i a^ib_i$, where the superscript $i$ represents the index (usually denotes contravariant), not the exponent.} is adopted unless otherwise stated. 


\subsection{Basic Quantities in Differential Geometry}\label{subsection:basic}

\newtheorem{definition}{Definition}

\begin{definition}\label{def:map}
    (Smooth Parameterization) Let $X$ and $Y$ be two smooth manifolds, $\dim X=\dim Y=n$, if there exists a smooth map $r:X \to Y$ such that $r$ is bijective and $\forall x\in X$, $r$ is a local immersion (i.e., the Jacobian matrix of $r$ with respect to x, namely, $Jr_x$, is injective), then X is a smooth parameterization of Y.
\end{definition}
Consider the smooth manifold $Y$ in Definition ~\ref{def:map}, which permits a smooth parameterization from a smooth manifold $X$ and the corresponding smooth map $r$. We define the bases $\bm r_i$ as the first-order partial derivatives,
\begin{equation}\label{eq:basis}
     \bm r_i=\frac{\partial r}{\partial x^i},
\end{equation}%
where $x^i$ is the $i$-th component of $x \in X$. 


Then, the metric tensor is obtained as,
\begin{equation}\label{eq:g}
    g_{ij}=\langle\bm r_i,\bm r_j\rangle,
\end{equation}%
where $\langle\bm r_i,\bm r_j\rangle$ is the inner product of $\bm r_i$ and $\bm r_j$. 

The contravariant metric tensor is the inverse of the metric tensor which always exists according to Definition~\ref{def:map} since the map $r$ is bijective and nonsingular,
\begin{equation}\label{eq:ginv}
    (g^{ij})=(g_{ij})^{-1}.
\end{equation}

As the basis $\bm r_i$ varies along the parameter $x^j$, we have the rate of change of the basis, 
\begin{equation}
    \bm r_{ij}=\frac{\partial \bm r_{i}}{\partial x^j}.
\end{equation}

The projection component of $\bm r_{ij}$ on the basis $\bm r_m$ is given by the Christoffel symbols of the second kind $\Gamma^m_{ij}$, 
   \begin{equation}\label{eq:cris}
    \Gamma^m_{ij}=\langle \bm r_{ij},\bm r_{k} \rangle g^{mk}.
\end{equation} 

The partial derivatives of $g^{mn}$ and $\Gamma^m_{ij}$ are then given by, 
\begin{subequations}\label{eq:partial_cg}
\begin{align}
    \frac{\partial g^{mn}}{\partial x^k}&=-g^{im}\Gamma^n_{ik}-g^{in}\Gamma^m_{ik},\\
    \frac{\partial\Gamma^m_{ij}}{\partial x^k}&=\langle\bm r_{ijk},\bm r_{n}\rangle g^{mn}+\bm \langle \bm r_{ij}, \bm r_{nk}\rangle g^{mn}+\langle\bm r_{ij},\bm r_{n}\rangle\frac{\partial g^{mn}}{\partial x^k},
\end{align}
\end{subequations}
where $\bm r_{ijk}$ is the third order partial derivative of $y=r(x) \in Y$ with respect to $x^i$, $x^j$, and $x^k \in X$.

\subsection{The Geodesic Equation}\label{subsec:geo}


A geodesic curve in a non-flat manifold is an analogy of a straight line in a flat space. There is no tangential acceleration along the geodesic curve, which is described by the geodesic equation,
\begin{equation}\label{eq:geodesic}
    \frac{d^2 x^m}{d \lambda^2}+\Gamma^m_{ij}\frac{d x^i}{d \lambda}\frac{d x^j}{d \lambda}=0,
\end{equation}
where $\lambda$ is a canonical parameterization that is proportional to the arc length of the geodesic curve. This is a set of second-order nonlinear differential equations.


\subsection{Analytic Representation of Geodesic Curves}

\newtheorem{theorem}{Theorem}
\newtheorem{lemma}{Lemma}
\begin{lemma}\label{lemma:geo}
Consider the smooth manifold $Y$ in Definition ~\ref{def:map}. $X$ is the smooth parameterization of $Y$.  The solution to \eqref{eq:geodesic} is a smooth function $x^m$ with respect to $\lambda$, which, thus, admits a Taylor series expansion. The $k$-th order coefficient $\Omega_k$ of the Taylor series expansion of the geodesic curve at $\lambda = 0$ is determined by the partial derivatives of $r$ whose orders are no greater than $k$ and by the velocity $\dot{x}^j$ for all $j$, namely,  
\begin{subequations}
 \begin{align}
        x^m(\lambda) &= x^m(0)+\dot{x}^m(0) \times \lambda+\frac{\ddot{x}^m(0)}{2!} \times (\lambda)^2+\dots, \nonumber\\
        &=  \sum\Omega_k \times (\lambda)^k,\label{eq:analytic}\\
     \Omega_k &= \Omega_k \big(\bm r_{i_{11}},\bm r_{i_{21} i_{22}},\dots,\bm r_{i_{k1} \dots i_{kk}},\dot{x}^j \big),\label{eq:taylor}
    \end{align}   
\end{subequations}
where $\dot{x}^m = dx^m/d\lambda$; $\Omega_k$ is the $k$-th order coefficient of the Taylor series; 
the subscripts $i_{11}, \dots, i_{kk}$ are index variables taking values over the set of integers from $1$ to $n$; $\bm r_{i_{k1}, \dots, i_{kk}}$ represent the $k$-th order partial derivatives of $y=r(x)$ with respect to $x$; $(\lambda)^k$ represents the $k$-th order monomial of $\lambda$. 


\end{lemma}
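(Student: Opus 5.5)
Since $r$ is smooth and $(g_{ij})$ is invertible everywhere (Definition~\ref{def:map}), the Christoffel symbols $\Gamma^m_{ij}$ defined in \eqref{eq:cris} are smooth functions of $x$. The geodesic equation \eqref{eq:geodesic} is then a second-order ODE with a smooth right-hand side. We rewrite it as the first-order system $\dot{x}^m = v^m$, $\dot{v}^m = -\Gamma^m_{ij}(x)v^iv^j$. The Picard--Lindel\"of theorem, together with smooth dependence on data, gives a unique smooth solution $x^m(\lambda)$ on an interval around $\lambda=0$. Its Taylor coefficients are therefore well defined, with $\Omega_k = \frac{1}{k!}\,\frac{d^k x^m}{d\lambda^k}(0)$. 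The substance of the lemma is the dependence statement \eqref{eq:taylor}, which we prove by induction on $k$.

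\textbf{Inductive claim.} For every $k\ge 2$,
\begin{equation*}
\frac{d^k x^m}{d\lambda^k} = P_k\Big(\partial^{\alpha}\Gamma^{\cdot}_{\cdot\cdot}\ (|\alpha|\le k-2),\ \dot{x}^j\Big),
\end{equation*}
where $P_k$ is a polynomial in the partial derivatives of the Christoffel symbols of order at most $k-2$ (evaluated along the curve) and in the velocities $\dot{x}^j$.

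\textbf{Base case and induction step.} The case $k=2$ is \eqref{eq:geodesic} itself. For the step, differentiate $P_k$ once in $\lambda$ and apply the chain rule.
\begin{itemize}
\item Each factor $\partial^\alpha\Gamma$ produces $\partial_l\partial^\alpha\Gamma\,\dot{x}^l$, which raises the derivative order by one, to at most $k-1$.
\item Each $\dot{x}^j$ produces $\ddot{x}^j$, which we replace using \eqref{eq:geodesic}; this introduces only $\Gamma$ and $\dot{x}$.
\end{itemize}
So $d^{k+1}x^m/d\lambda^{k+1}$ depends only on derivatives of $\Gamma$ of order at most $k-1$ and on $\dot{x}$. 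Evaluating at $\lambda=0$ gives $\Omega_k$ for $k\ge2$. For $k=0,1$ the coefficients are just the initial data $x^m(0)$ and $\dot{x}^m(0)$.

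\textbf{Translating to derivatives of $r$.} It remains to show that a $p$-th partial derivative of $\Gamma^m_{ij}$ involves only derivatives of $r$ of order at most $p+2$. By \eqref{eq:cris}, $\Gamma^m_{ij}$ is built from $\bm r_{ij}$, $\bm r_k$ and $g^{mk}$. The metric $g^{mk}$ is the inverse of $g_{ij}=\langle \bm r_i,\bm r_j\rangle$, so it depends only on first derivatives of $r$. Hence $\Gamma$ involves derivatives of $r$ of order at most $2$.

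Formula \eqref{eq:partial_cg} shows that differentiating $\Gamma$ or $g^{mn}$ once raises the maximal order of $r$-derivatives by exactly one. Derivatives of $g^{mn}$ are re-expressed through $g$ and $\Gamma$, so no new types of quantities appear. Iterating this $p$ times gives the bound $p+2$. With $p\le k-2$, the coefficient $\Omega_k$ depends on $\bm r_{i_{11}},\dots,\bm r_{i_{k1}\dots i_{kk}}$ and on $\dot{x}^j$, which is \eqref{eq:taylor}.

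\textbf{Main obstacle.} The main subtlety is the claim that the geodesic is represented by its Taylor series \eqref{eq:analytic}, as opposed to merely having a formal Taylor expansion. Smoothness alone does not guarantee convergence. To get convergence we would additionally need $r$ to be real-analytic; for the power flow map this holds, since it is polynomial in the voltage components. Then $\Gamma$ is analytic, and the Cauchy--Kovalevskaya (analytic ODE) theorem gives an analytic solution near $0$. Within the scope of this lemma we would state this explicitly: the coefficient identification holds for any smooth $r$, and convergence of the series follows from analyticity.
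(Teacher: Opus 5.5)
Your proof is correct and follows essentially the paper's route: an induction on $k$ that differentiates the $k$-th derivative along the curve by the chain rule and eliminates $\ddot{x}^j$ using \eqref{eq:geodesic}. The paper tracks $\Omega_k$ directly as a polynomial in the $\bm r$-derivatives, $\dot{x}^j$ and $g^{il}$ (and then uses that $g^{il}$ is rational in $\bm r_i$), whereas you first track derivatives of $\Gamma$ and convert them via \eqref{eq:partial_cg}; this is only a bookkeeping difference. Your caveat that smoothness alone does not make the series \eqref{eq:analytic} converge, and that real-analyticity of the power flow map is what is actually needed, validly sharpens the paper's claim that a smooth solution ``thus admits a Taylor series expansion.'' One small correction: in polar coordinates the map is trigonometric rather than polynomial, but it is still real-analytic, so your conclusion stands.
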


The proof of Lemma~\ref{lemma:geo} is by induction and is given in Appendix~\ref{append:1}.

Note that estimating higher-order derivatives $\bm r_{i_{k1} \dots i_{kk}}$ directly from data would normally require an enormous amount of observations. Fortunately, in the power system context, these derivatives of the power flow map can be fully recovered from the first-order derivatives, substantially reducing the data acquisition burden, as will be detailed in Section \ref{section:derivatives}. 

\subsection{Rational Representation of Geodesic Curves}\label{subsec:pade}
The Taylor series may only have a finite radius of convergence. To address this problem, we further represent the geodesic curve as a rational function by the Padé approximation, as shown below,
\begin{equation}\label{eq:Pade}
    x^m(\lambda)={\sum_{n=0}^\infty a_n \times (\lambda)^n}/{\sum_{l=0}^\infty b_l \times (\lambda)^l}.
\end{equation}
The coefficients $a_n$ and $b_n$ of the Padé approximant can be derived from the Taylor series by equating \eqref{eq:analytic} to \eqref{eq:Pade} and matching up the coefficients of the same order below,
\begin{equation}\label{eq:calpade}
\bigg(\sum_{l=0}^{\infty} b_l \times (\lambda)^l \bigg) \bigg(\sum_{k=0}^{\infty}\Omega_k \times (\lambda)^k \bigg) =\sum_{n=0}^\infty a_n \times (\lambda)^n.
\end{equation}
A detailed formula of Padé coefficients can be found in \cite{Holomorphic}.
\subsection{Shortest Path on Riemannian Manifold}
\begin{lemma}\label{lemma:distance}
Let $(Y,g)$ be a connected Riemannian manifold with boundary $\partial Y$, let $d_Y$ be the intrinsic distance from the metric tensor $g$. If  $(Y,d_Y)$ is compact or closed in a complete ambient manifold, $(Y,d_Y)$ is a proper length space, i.e., $\forall$ $y_0,y_1\in Y$, $\exists$ a shortest path $\gamma(t)\subset Y:\gamma(0)=y_0,\gamma(1)=y_1$. If $\gamma$ lies entirely in the interior of $Y$ except possibly at its endpoints $y_0$ and $y_1$, it is a geodesic curve.
\end{lemma}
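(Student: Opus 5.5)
The plan is to assemble the statement from three standard facts: the length-space structure induced by $g$, the Hopf--Rinow/Arzelà--Ascoli existence theorem for proper length spaces, and the first-variation argument showing that interior minimizers are geodesics.

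\textbf{Step 1: length structure.} For piecewise $C^1$ curves $\gamma:[0,1]\to Y$, set the length to be $L(\gamma)=\int_0^1\sqrt{g_{ij}\dot\gamma^i\dot\gamma^j}\,dt$. Define $d_Y(y_0,y_1)$ as the infimum of $L$ over such curves joining $y_0$ to $y_1$. Connectedness of $Y$ makes $d_Y$ finite. Smoothness and positive definiteness of $g$, which holds on $Y$ by Definition~\ref{def:map} and \eqref{eq:ginv}, show that $d_Y$ is a metric inducing the manifold topology. By construction $(Y,d_Y)$ is then a length space. I extend $L$ to rectifiable (Lipschitz) curves through the metric length $\sup\sum d_Y(\gamma(t_{k}),\gamma(t_{k+1}))$, which agrees with the integral on $C^1$ curves.

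\textbf{Step 2: properness and existence.} First show that closed bounded subsets of $(Y,d_Y)$ are compact. If $Y$ is compact this is immediate. If $Y$ is closed in a complete ambient Riemannian manifold $M$, use that the intrinsic distance dominates the ambient one, $d_Y\ge d_M|_Y$. A $d_Y$-bounded set is therefore $d_M$-bounded, hence relatively compact in $M$ by Hopf--Rinow, and closedness of $Y$ in $M$ keeps the limits in $Y$. A $d_Y$-closed and $d_M$-compact set is then $d_Y$-compact, because the two metrics induce the same topology on $Y$.

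With properness established, take a minimizing sequence $\gamma_n$ from $y_0$ to $y_1$ with $L(\gamma_n)\to d_Y(y_0,y_1)$. Reparametrize each by constant speed, so the family is uniformly Lipschitz with constant $\le d_Y+1$. All $\gamma_n$ lie in a closed $d_Y$-ball, which is compact. Arzelà--Ascoli then gives a uniformly convergent subsequence with limit $\gamma$. Lower semicontinuity of metric length gives $L(\gamma)\le\liminf L(\gamma_n)=d_Y(y_0,y_1)$, so $\gamma$ is a shortest path.

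\textbf{Step 3: interior minimizers are geodesics.} Suppose $\gamma((0,1))\subset Y\setminus\partial Y$. Around each interior point $\gamma(t_0)$ choose a small convex normal neighborhood $U$ contained in the interior. Any subarc of a minimizer is itself minimizing. A short subarc inside $U$ therefore coincides with the unique radial geodesic joining its endpoints, which is the standard Gauss-lemma argument. Equivalently, the first variation of energy with compactly supported variations, which are admissible because we are away from $\partial Y$, gives exactly \eqref{eq:geodesic} once $\gamma$ is parametrized proportionally to arc length. Covering $(0,1)$ by such subarcs shows that $\gamma$ is smooth and satisfies \eqref{eq:geodesic} on $(0,1)$, with continuity at the endpoints.

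\textbf{Main obstacle.} I expect the main difficulty to be Step 2 in the "closed in a complete ambient manifold" case. Two points need care. The first is that $d_Y$ and $d_M$ induce the same topology on $Y$ even near $\partial Y$. My first attempt would be a local comparison near the boundary using collar or chart coordinates to get $d_Y\le C\,d_M$ locally, which may need mild regularity of $\partial Y$; I would assume this implicitly. The second is that the limit curve stays in $Y$, which follows from closedness. The interior-geodesic conclusion is then routine.
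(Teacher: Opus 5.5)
Your argument is correct under the standard reading of ``Riemannian manifold with boundary'' (smooth $\partial Y$, with $g$ smooth and positive definite up to the boundary). It is the usual self-contained route: length structure, properness via Hopf--Rinow in the ambient manifold, Arzel\`a--Ascoli with lower semicontinuity of length for existence, and local uniqueness of minimizers in convex normal neighborhoods for the interior statement.

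The paper gives no argument of its own. It defers to Wolter's thesis \cite{Franz_master}, which, judging by its title, treats the interior metric on Riemannian manifolds with \emph{not necessarily smooth} boundary. That is where the two differ. Your Step 2 relies on the local comparison $d_Y\le C\,d_M$ near $\partial Y$, which you flag yourself. You also use it tacitly in Step 1, when you identify the $d_Y$-topology and the metric length with the Riemannian ones. This comparison is automatic for a smooth boundary but not in general. The set the paper actually feeds into the lemma in Proposition~\ref{pro:pls} is $\bar{\mathcal{Y}}_\xi$, whose boundary $r(\partial\mathcal{X}_\xi)$ is the image of the singular-Jacobian set and need not be smooth (for instance at cusp points of $r$). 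So your proof gives transparency for the lemma as literally stated, while the cited source is the one meant to cover the application.

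You can drop the comparison from the existence part by running Arzel\`a--Ascoli in $(M,d_M)$ instead:
\begin{itemize}
\item the constant-speed minimizing curves are uniformly $d_M$-Lipschitz, since $d_M\le d_Y$;
\item the limit stays in $Y$ because $Y$ is closed;
\item for curves in $Y$ the $d_M$-length and the $d_Y$-length coincide, provided $d_Y$ is defined as the infimum of lengths of rectifiable curves in $Y$;
\item lower semicontinuity, applied also to subarcs, makes the limit both $d_Y$-Lipschitz and length-minimizing.
\end{itemize}
Only the properness claim itself would then still need regularity of $\partial Y$.

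A minor point: positive definiteness of $g$ is part of the hypothesis that $(Y,g)$ is Riemannian, not a consequence of Definition~\ref{def:map} and \eqref{eq:ginv}. In the paper's setting the coordinate expression $g_{ij}=\langle\bm r_i,\bm r_j\rangle$ degenerates exactly on $\partial\mathcal{X}_\xi$, so that citation would not give definiteness up to the boundary anyway.
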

The proof can be found in page 8 in \cite{Franz_master}. This Lemma implies that there exists a shortest path between two points. It can be either a geodesic curve joining them directly (interior) or a curve concatenated by several segments of geodesics and paths on the boundary. More complicated scenarios can be found in \cite{Franz_master, Franz_phd}.  In the power system context, we are particularly interested in where the system can evolve to without traversing the singular boundary. Hence, the interior geodesic curves in certain power flow manifolds are of particular importance, which is the main topic of the following sections.

\section{Power Flow Map in Differential Geometry}\label{section:derivatives}
In this section, we focus on properly defining the power flow map and connecting it to its physical interpretation. We restrict our attention to certain manifolds that are commonly considered in power engineering but are seldom well-defined. Then, we show that the higher-order partial derivatives of the power flow map can be expressed in terms of the first-order derivatives and the current operating point. This result substantially simplifies the complicated higher-order derivatives introduced in Section~\ref{section:model}. We temporarily drop the Einstein summation convention in this section to accommodate a common power engineering index convention. 

\subsection{The Power Flow Map and The Associated Manifold}
Consider a power grid with $N$ nodes. Without loss of generality, assume the first $N_l$ buses are PQ buses where both active and reactive power injections are specified. The buses from $N_l+1$ to $N_l+N_g$ are PV buses where active power injections and voltage magnitudes are specified. And the last bus is the slack bus, its voltage magnitude $V_N$ and angle $\delta_N$ are specified.
The power flow equation is shown below
\begin{subequations}\label{eq:cpq}
\begin{align}
V_i\sum_{j=1}^NV_j(G_{ij}\cos\delta_{ij}+B_{ij}\sin\delta_{ij}) &= P_i\label{eq:cp}\\
V_i\sum_{j=1}^NV_j(G_{ij}\sin\delta_{ij}-B_{ij}\cos\delta_{ij}) &= Q_i\label{eq:cq}
\end{align}
\end{subequations}%
where $P_i$ and $Q_i$ are the active and reactive power injections at the $i$-th bus, respectively; $V_i$ is the voltage magnitude of the $i$-th bus, $\delta_{ij} = \delta_i-\delta_j$ represents the phase angle difference between the $i$-th node and the $j$-th node; $G_{ij}$ and $B_{ij}$ are the $(i,j)$-th elements of the bus conductance and susceptance matrices. 
\begin{definition}\label{def:PFmap}
(Power Flow Map) Consider the power grid defined above. Let $x = (\delta, V) \in X$ where $\delta$ comprises $\delta_1$ to $\delta_{N-1}$ and $V$ comprises $V_1$ to $V_{N_l}$. Let $y=(P, Q_l) \in Y$ where $P$ comprises $P_1$ to $P_{N-1}$ and $Q_l$ comprises $Q_1$ to $Q_{N_l}$. Then, the power flow equation \eqref{eq:cpq} defines a map $r$ from $X$ to $Y$ such that
    \begin{equation}\label{eq:PF_map}
        r:X \to Y,~r(x)=y, 
    \end{equation}
    where $r$ includes \eqref{eq:cp} from bus-$1$ to bus-$(N_l+N_g)$ and \eqref{eq:cq} from bus-$1$ to bus-$N_l$.
\end{definition}

Given the non-unique nature of power flow solutions, the power flow map is generally not a bijection. Consequently, X does not constitute a parameterization of Y, and the differential geometric methods outlined in Section \ref{section:model} cannot be applied directly. However, by imposing appropriate constraints—such as those specified in Definition~\ref{def:subm}, which are consistent with practical engineering considerations—a smooth parameterization can be obtained.

\begin{definition}\label{def:subm}
     (Associated Manifold of $\xi$) Consider a feasible operating point $\xi=(x_0,y_0)$ that satisfies the power flow map $r$ within engineering constraints. The associated manifold $\mathcal{X}_\xi \subseteq X$ of $\xi$ is the set such that $\forall x \in \mathcal{X}_\xi$, $\exists$ a continuous path $\gamma:[0,1]\to X$ such that $\gamma(0)=x_0$, $\gamma(1)=x_1$, and $\forall t\in[0,1]$,  the Jacobian matrix $Jr_{x(t)}$ is invertible. Furthermore, for any $\delta_k$ of $x$, $\exists$ $L_k\in(0, 2\pi)$ and $b_k\in \mathbb{R}$ such that, 
     \begin{equation}\label{eq:2pilength}
    \delta_k \in \bigl[\,b_k,\; b_k + L_k\,\bigr].
    \end{equation}
    Let $\mathcal{Y}_\xi$ be the induced manifold in Y such that, 
\begin{equation}
    \mathcal{Y}_\xi=\{y=r(x)|x\in\mathcal{X}_\xi\}.
\end{equation}
\end{definition}

\begin{definition}\label{def:boundary}
    (Associated Singular Boundary) Let $\partial\mathcal{X}_\xi$ be the boundary of $\mathcal{X}_\xi$ such that $\forall x \in \partial\mathcal{X}_\xi$,  $\exists$ a continuous path $\gamma:[0,1]\to X$ such that $\gamma(0)=x_0$, $\gamma(1)=x$, and $\forall t\in[0,1)$,  the Jacobian matrix $Jr_{x(t)}$ is invertible, while at $t=1$, $Jr_{x(1)}$ is singular. Furthermore, for any $\delta_k$ of $x$, $\exists$ $L_k\in(0, 2\pi)$ and $b_k\in \mathbb{R}$ such that \eqref{eq:2pilength} holds. 
    Let $\partial\mathcal{Y}_\xi$ be the boundary of $\mathcal{Y}_\xi$ such that 
    \begin{equation}
        \partial\mathcal{Y}_\xi=\{y \in Y |y=r(x),x\in\partial\mathcal{X}_\xi \}.
    \end{equation}
\end{definition}

\begin{proposition}\label{proposition:propermap}
 The power flow map $r$ defines a proper smooth map from $\mathcal{X}_\xi$ to $\mathcal{Y}_\xi$.
 \begin{proof}
     According to \cite{lesieutre2015:efficient,wu2017phd}, the power flow model in rectangular coordinates can be formulated as an equivalent family of quadratic ellipsoids under very mild conditions. A quadratic ellipsoidal map is a smooth proper map. Furthermore, under the angle constraints of \eqref{eq:2pilength}, polar coordinates can be mapped onto rectangular coordinates via a smooth bijection. Thus, the power flow map yields a smooth proper map from $\mathcal{X}_\xi$ to $\mathcal{Y}_\xi$.
 \end{proof}
\end{proposition}

 \begin{proposition}\label{pro:sp}
    The power flow map $r$ defines a smooth parameterization from $\mathcal{X}_\xi$ to $\mathcal{Y}_\xi$. 
\begin{proof}
    By Definition~\ref{def:subm}, $\mathcal{X}_\xi$ and $\mathcal{Y}_\xi$ are connected manifolds, and they are local diffeomorphic since $r$ is smooth and the Jacobian matrix $Jr_x$ is invertible. By Proposition~\ref{proposition:propermap}, the power flow map is a smooth proper map from $\mathcal{X}_\xi$ to $\mathcal{Y}_\xi$. Due to the properness, connectedness, and local diffeomorphism properties of the map, it constitutes a smooth parameterization from $\mathcal{X}_\xi$ to $\mathcal{Y}_\xi$.
\end{proof}
 \end{proposition}
Proposition~\ref{pro:sp} indicates that for an associated manifold satisfying \eqref{eq:2pilength}, the differential geometric machinery presented in Section~\ref{section:model} can be applied.

\begin{proposition}\label{pro:pls}
    Let $\bar{\mathcal{X}}_\xi=\mathcal{X}_\xi \cup \partial\mathcal{X}_\xi$, $\bar{\mathcal{Y}}_\xi=\mathcal{Y}_\xi \cup \partial\mathcal{Y}_\xi=r(\bar{\mathcal{X}}_\xi)$, $d_{{\mathcal{Y}}_\xi}$ be the intrinsic distance from the metric tensor $g_{\mathcal{Y}_\xi}$, then $(\bar{\mathcal{Y}}_\xi,d_{{\mathcal{Y}}_\xi})$ is a proper length space.
\begin{proof}
    By Definition~\ref{def:subm} and \ref{def:boundary}, $\bar{\mathcal{X}}_\xi$ is connected and closed in $\mathbb{R}^{2N_l+N_g}$ (a complete ambient space). By the same argument in Proposition~\ref{proposition:propermap}, the power flow map is a smooth proper map from $\bar{\mathcal{X}}_\xi$ to $\bar{\mathcal{Y}}_\xi$. Thus, $\bar{\mathcal{Y}}_\xi$ is a connected manifold closed in $\mathbb{R}^{2N_l+N_g}$. By Lemma~\ref{lemma:distance}, $\bar{\mathcal{Y}}_\xi$ is a proper length space.
\end{proof}
\end{proposition}
Proposition~\ref{pro:pls} implies that there always exists a shortest path between any pair of operating points in the power space defined in Definition~\ref{def:subm}. 
\subsection{The Full Power Flow Map and Its Partial Derivatives}
\begin{definition}\label{def:fullmap}
    (Full Power Flow Map) We construct the full power flow map $\tilde{r}$ by incorporating the extra reactive power balance equations based on \eqref{eq:cq} at PV buses into the power flow map $r$, 
    \begin{equation}\label{eq:full_PF}
        \tilde{r}:X \to \tilde{Y},~\tilde{r}(x) = \tilde{y} = (P, Q),
    \end{equation}%
    where $Q$ comprises $Q_1$ to $Q_{N-1}$; $\tilde{Y}$ is a $(2N_l+2N_g)$-dimensional manifold.
\end{definition}

All derivatives of arbitrary order for the power flow map $r$ are encompassed within the full power flow map $\tilde{r}$. In the following, we will demonstrate a key property of the derivatives of the full power flow map: any higher-order derivative can be expressed in terms of the first-order derivatives and the information at the current operating point.
\begin{lemma}\label{th:derive} 
Considering the full power flow map $\tilde{r}$ defined in Definition~\ref{def:fullmap}, all higher-order partial derivatives of $(P, Q)$ with respect to $(\delta, V) $ can be expressed as rational functions of the first-order derivatives of $(P, Q)$ with respect to $\delta$ and certain $(P,Q,V)$ value at the current operating point $\xi$. As shown in \eqref{eq:rational},
\begin{equation}\label{eq:rational}
    \tilde{\bm r}_{i_1 \dots i_{k}}=A_k \tilde{\bm r}_{j_1}+B_kP_{j_2}+C_kQ_{j_3},
\end{equation}
where $A_k$, $B_k$, and $C_k$ are the rational functions with respect to some voltages $V_{j_4}$, the subscripts $i_{1} \dots i_{k},j_1,j_2,j_3,j_4$ are index variables taking values over the set of integers from $1$ to $2N-1$.

\end{lemma}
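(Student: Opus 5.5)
Write the full power flow as $P_i+\mathrm{j}Q_i = \sum_j V_iV_j Y_{ij}^* e^{\mathrm{j}\delta_{ij}}$, i.e., as a sum of terms $T_{ij}=V_iV_j(\cdot)$, each trigonometric in $\delta_{ij}$ and bilinear in the magnitudes $(V_i,V_j)$. The proof exploits two algebraic closures of these terms: \emph{(i)} differentiation with respect to an angle maps the pair $(\cos\delta_{ij},\sin\delta_{ij})$ into itself up to sign; \emph{(ii)} each term is homogeneous of degree one in $V_i$ and in $V_j$. I would first record the closed-form first-order derivatives, which are standard:
\begin{equation*}
\frac{\partial P_i}{\partial \delta_i} = -Q_i - B_{ii}V_i^2,\quad \frac{\partial Q_i}{\partial \delta_i} = P_i - G_{ii}V_i^2,
\end{equation*}
and, for $j\neq i$, $\partial P_i/\partial\delta_j = V_iV_j(G_{ij}\sin\delta_{ij}-B_{ij}\cos\delta_{ij})$ and $\partial Q_i/\partial\delta_j = -V_iV_j(G_{ij}\cos\delta_{ij}+B_{ij}\sin\delta_{ij})$. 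For the magnitude derivatives, homogeneity gives $V_i\,\partial P_i/\partial V_i = P_i + G_{ii}V_i^2$ and $V_j\,\partial P_i/\partial V_j = -\partial Q_i/\partial \delta_j$ for $j\neq i$, with analogous identities for $Q_i$. Thus every $V$-derivative is already a rational function (with denominators $V_j$) of the $\delta$-derivatives, of $P,Q$, and of voltages, so the first-order $V$-derivatives also have the form \eqref{eq:rational}.

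Next comes the base case $k=2$. Every second derivative of $P_i$ or $Q_i$ is a derivative of one of the expressions above. Differentiating an off-diagonal entry $\partial P_i/\partial\delta_j$ with respect to $\delta_i$ or $\delta_j$ swaps $\sin$ and $\cos$, so the result equals $\pm\partial Q_i/\partial\delta_j$. Differentiating it with respect to $V_i$ or $V_j$ divides it by that voltage, and differentiating it with respect to any other variable gives zero. The diagonal entries are polynomial in $P_i,Q_i,V_i$, so their derivatives reduce, by the first-order formulas, to first-order entries plus $P,Q,V$ terms. Hence each second derivative is a combination $A_2\tilde{\bm r}_{j_1}+B_2P_{j_2}+C_2Q_{j_3}$ with coefficients rational in $V$.

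The induction on $k$ then proceeds as follows. Assume $\tilde{\bm r}_{i_1\dots i_k}=A_k\tilde{\bm r}_{j_1}+B_kP_{j_2}+C_kQ_{j_3}$ with $A_k,B_k,C_k$ rational in $V$. Differentiating with respect to $x^{i_{k+1}}$ gives four kinds of terms:
\begin{itemize}
\item $\partial A_k/\partial x^{i_{k+1}}$, which is rational in $V$ (and zero for angle variables);
\item $A_k$ times a second derivative, which the base case reduces to the required form;
\item $B_k\,\partial P_{j_2}/\partial x^{i_{k+1}}$ and $C_k\,\partial Q_{j_3}/\partial x^{i_{k+1}}$, which are first-order entries $\tilde{\bm r}$;
\item $P$ and $Q$ times derivatives of the coefficients, which are of the form $B P$ and $C Q$.
\end{itemize}
Collecting terms closes the induction. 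The identity should be read entrywise, with the coefficients and indices depending on the entry, as in the statement.

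The main obstacle is the bookkeeping. The closure under $\delta$-differentiation must be checked carefully for the diagonal entries ($i=j$). The slack bus and the PV buses must be handled correctly: $V_N$ and the PV magnitudes are fixed constants, not variables, and $Q$ at PV buses must be included, which is why the full map $\tilde r$ is used. I would also state explicitly that the rational coefficients have denominators only in powers of the $V_j$, which are nonzero in the feasible region.
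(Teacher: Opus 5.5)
Your argument follows the paper's proof: induction on the order, a base case of explicit second-derivative identities coming from the $\sin/\cos$ closure under angle differentiation and the bilinearity in the magnitudes, and an inductive step by the product rule in which $A_k\,\partial\tilde{\bm r}_{j_1}/\partial x^{i_{k+1}}$ is rewritten using the base case. Your reduction of the first-order $V$-derivatives to $\delta$-derivatives (e.g.\ $V_j\,\partial P_i/\partial V_j=-\partial Q_i/\partial\delta_j$) is a useful addition. The paper's inductive step produces terms $B_k\,\partial P_{j_2}/\partial x^{i_{k+1}}$ that may be $V$-derivatives, and the statement's claim that only $\delta$-derivatives are needed relies on exactly these identities.

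One step is missing, however, and it is the one that gives the lemma its content. Your diagonal formulas carry the self-admittances $G_{ii},B_{ii}$. Differentiating them as you describe leaves terms such as $\partial^2P_i/\partial V_i^2=2G_{ii}$ or $\partial^2P_i/\partial\delta_i\partial V_i=-Q_i/V_i-B_{ii}V_i$. These are not of the form \eqref{eq:rational}, which has no standalone term, only multiples of $\tilde{\bm r}_{j_1}$, $P_{j_2}$, $Q_{j_3}$. They also involve network parameters that the model-free setting of Theorem~\ref{th1} does not have.

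The fix is to invert your own diagonal identities, $G_{ii}V_i^2=P_i-\partial Q_i/\partial\delta_i$ and $B_{ii}V_i^2=-Q_i-\partial P_i/\partial\delta_i$, and substitute them everywhere. This gives, e.g., $V_i\,\partial P_i/\partial V_i=2P_i-\partial Q_i/\partial\delta_i$, $\tfrac{1}{2}V_i^2\,\partial^2P_i/\partial V_i^2=P_i-\partial Q_i/\partial\delta_i$, and $V_i\,\partial^2P_i/\partial V_i\partial\delta_i=\partial P_i/\partial\delta_i$. These are precisely the paper's $i=j=l$ identities. With this elimination made explicit, both in the base case and in your claim that the first-order $V$-derivatives are already of the required form, the proof is complete.
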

The proof of Lemma~\ref{th:derive} is by induction and is given in Appendix~\ref{append:2}.

\section{Evaluating Power Flow Manifold from Local Measurements}\label{section:evaluating}

\begin{theorem}\label{th1}
Consider the power flow map $r$ from $\bar{\mathcal{X}}_\xi$ to $\bar{\mathcal{Y}}_\xi$. Let $\xi=(x_0,y_0)$ be a regular operating point such that $r(x_0) = y_0$. For any point $y \in \bar{\mathcal{Y}}_\xi$, there exists a shortest path $\eta \subset \bar{\mathcal{Y}}_\xi$ between $y_0$ and $y$. If $\eta$ lies entirely in $\mathcal{Y}_\xi$, $\eta$ is geodesic, and the preimage of $y$, namely, $x = r^{-1} (y) \in \bar{\mathcal{X}}_\xi$, can be evaluated purely from the value of $(P,Q,V)$ at $\xi$ and the first-order partial derivatives of the full power flow map $\tilde{r}$ at $\xi$. 

\begin{proof}
According to Proposition~\ref{pro:pls}, for any pair of points in $\bar{\mathcal{Y}}_\xi$, there must exist a shortest path between them. Thus, $\eta$ exists between $y_0$ and $y$. If we further have $\eta \subset \mathcal{Y}_\xi$, the entire path $\eta$, including the endpoint $y$, is in the interior of $\bar{\mathcal{Y}}_\xi$. Hence, $\eta$ is a geodesic curve by Lemma~\ref{lemma:distance}. 

According to Proposition~\ref{pro:sp}, ${\mathcal{X}}_\xi$ is a smooth parameterization of $\mathcal{Y}_\xi$ under the power flow map $r$, which implies that every point in the geodesic curve $\eta$ is well-defined by the geodesic equation \eqref{eq:geodesic}. Thus, the preimage of $y$, namely, $x = r^{-1} (y)$, can be evaluated from the Taylor series of the preimage curve of $\eta$ at the initial point $x_0$, according to Lemma~\ref{lemma:geo}. 

The coefficients $\Omega_k$'s of this Taylor series are described by \eqref{eq:taylor}, which requires the knowledge of certain higher-order partial derivatives of $r$. According to Lemma~\ref{th:derive}, these higher-order partial derivatives can be completely determined by the first-order partial derivatives of $\tilde{r}$ in \eqref{eq:rational}, which concludes the theorem.
\end{proof}
\end{theorem}

\begin{theorem}\label{th2} 
Under the same assumptions of Theorem~\ref{th1}, continuing the path $\eta$ geodesically until it first reaches the boundary at $y_s \in \partial \mathcal{Y}_\xi$. The intrinsic distance $d_{{\mathcal{Y}}_\xi}(y_0, y_s)$ can be evaluated purely from the value of $(P,Q,V)$ at $\xi$ and the first-order partial derivatives of the full power flow map $\tilde{r}$ at $\xi$.

\begin{proof}
    Following the same arguments in the proof of Theorem~\ref{th1}, the preimage of any point $y$ in the geodesic curve $\eta$ except the endpoint $y_s$ can be evaluated by the Taylor series at the initial point $x_0$. At the preimage of $y_s$, namely, $x_s = r^{-1}(y_s)$, the Jacobian matrix $J_{r(x_s)}$ of the power flow map becomes singular, which fails the smooth parameterization in Proposition~\ref{pro:sp}. Thus, the geodesic equation \eqref{eq:geodesic} is not well-defined at $x_s$, which further prevents the Taylor series expansion. To regain the information at the singularity, apply analytic continuation by constructing the $[L/M]$ Pad\'e approximation \eqref{eq:Pade} from the Taylor series. According to the convergence theorem in \cite{stahl1989:convergence,stahl1997:convergence}, as $L \approx M \to +\infty$, the smallest pole on the real axis will converge to $d_{{\mathcal{Y}}_\xi}(y_0, y_s)$. Since the coefficients of Pad\'e approximation can be calculated from the Taylor series in \eqref{eq:calpade}, and the Taylor series is purely determined from the initial point $\xi$ and the first-order partial derivatives, we conclude the theorem.
\end{proof}
\end{theorem}


\begin{definition}
        (Local Linear Patch) Consider a feasible operating point $\tilde{\xi}=(x_0,\tilde{y}_0)$ that satisfies the full power flow map $\tilde{r}$ within engineering constraints. The local linear patch $Y_L$ is the image of the neighborhood $X_L$ of $x_0$ under the map such that the first-order approximation dominates, i.e., 
\begin{equation}
    Y_L=\big\{\tilde{y} \in \tilde{Y} \big| \tilde{y}=\tilde{r}(x_0+h)= \tilde{y}_0 + h^i {\tilde{\bm r}_{i}} + O(||h||^2)\big\},
\end{equation}%
where $x_0+h \in X_L$. 
\end{definition}

\begin{corollary}\label{cor:2}
    Under the assumptions of Theorem~\ref{th1}, using the sample $(P,Q,V)$ of the current operating point $\tilde{\xi}$ and at least $2N_l+N_g$ many linearly independent samples of $(\Delta P,\Delta  Q,\Delta V,\Delta \delta)$ drawn from the local linear patch of $\tilde{\xi}$, we can estimate the associated manifold at any regular point.

\begin{proof}
    Considering the increment of the power flow map, we have
\begin{align}\label{eq:def_partial}
\begin{bmatrix} 
\frac{\partial P}{\partial\delta}  & \frac{\partial P}{\partial V}  \\ 
\frac{\partial Q}{\partial\delta} & \frac{\partial Q}{\partial V}
\end{bmatrix}
\begin{bmatrix} 
d\delta\\ 
dV\\ 
\end{bmatrix}
=\begin{bmatrix}
    dP \\
    dQ\\
\end{bmatrix}.
\end{align}
These incremental quantities $dP$, $dQ$, $d\delta$, and $dV$ can be estimated from the difference between the given operating point and its nearby points. For example, $dP^1 \approx \Delta P^1 = P^1 - P^0$ where $P^1$ is a point near $P^0$. 

Therefore, the Jacobian matrix in \eqref{eq:def_partial} can be approximated from at least ${2N_l+N_g}$ many data points around the given operating point, 
\begin{align}\label{eq:ca_partial}
\begin{bmatrix} 
\frac{\partial P}{\partial\delta}  & \frac{\partial P}{\partial V}  \\ 
\frac{\partial Q}{\partial\delta} & \frac{\partial Q}{\partial V}
\end{bmatrix}
=\begin{bmatrix}
    \Delta P^1 &\cdots&\Delta P^{n} \\
    \Delta Q^1 &\cdots&\Delta Q^{n}\\
\end{bmatrix} \begin{bmatrix} 
\Delta \delta^1 &\cdots&\Delta \delta^{n} \\ 
\Delta V^1&\cdots&\Delta V^{n}\\ 
\end{bmatrix}^{-1},
\end{align}
where $n={2N_l+N_g}$. 

By Theorem~\ref{th1}, the corollary holds.
\end{proof}
\end{corollary}

\begin{corollary}\label{cor:3}
    Under the same data assumptions of Corollary~\ref{cor:2}, we can estimate the distance to the power flow singular boundary along any geodesic curve. 
    \begin{proof}
        Following the same arguments in the proof of Corollary~\ref{cor:2}, Theorem~\ref{th2} implies the corollary. 
    \end{proof}
\end{corollary}

\begin{remark}
    It should also be noted that only incremental quantities are needed for estimating the Jacobian matrix by \eqref{eq:ca_partial}, suggesting that there is no need for a global angle reference at different buses during the observation. In other words, independent measurements at each bus are sufficient for the proposed method.
\end{remark}
 

The overall evaluation routine is summarized in Algorithm~\ref{algorithm:Ray-based}. 
\begin{algorithm}[tb!]
\caption{Geodesic-Inspired Power Flow Manifold Evaluation Based on Local Data}\label{algorithm:Ray-based}
    \renewcommand{\algorithmicrequire}{\textbf{Input:}}
    \renewcommand{\algorithmicensure}{\textbf{Output:}}
\begin{algorithmic}[1]
\REQUIRE $2N_l + N_g$ measurements near the current operating point and the interested power injection profile. 
\ENSURE Voltage solution or infeasible alarm.
 \STATE Approximate the first-order partial derivatives of the power flow map by \eqref{eq:ca_partial}. (only once)
\STATE Construct basis \eqref{eq:basis}, metric tensors \eqref{eq:g} and \eqref{eq:ginv}, and Christoffel symbols \eqref{eq:cris}. (only once)
\STATE Construct higher-order partial derivatives of the power flow map from the first-order derivatives. (only once)
\STATE Choose the interested power injection profile.
\STATE Calculate coefficients of Taylor series. 
\STATE Calculate coefficients of Padé approximant.
\STATE Evaluate the interested power and voltages.
\RETURN Output
\end{algorithmic}
\end{algorithm}

\section{Numerical Simulations}\label{sec:simu}

In this section, we present numerical examples based on two systems, a 4-bus case and a 9-bus system, to demonstrate the efficacy of the proposed approach. The accuracy of the approximation results is further compared when the data originate from patches of different ranges. The system topologies are shown in Fig.\ref{fig:topology}.
The Taylor series computed in both cases is truncated at the 6th order with $[3/3]$ Padé approximation. 

\begin{figure}[tb!]
	\begin{center}
		\subfigure[4-bus system]{\label{fig:top4}\includegraphics[width=0.5\columnwidth]{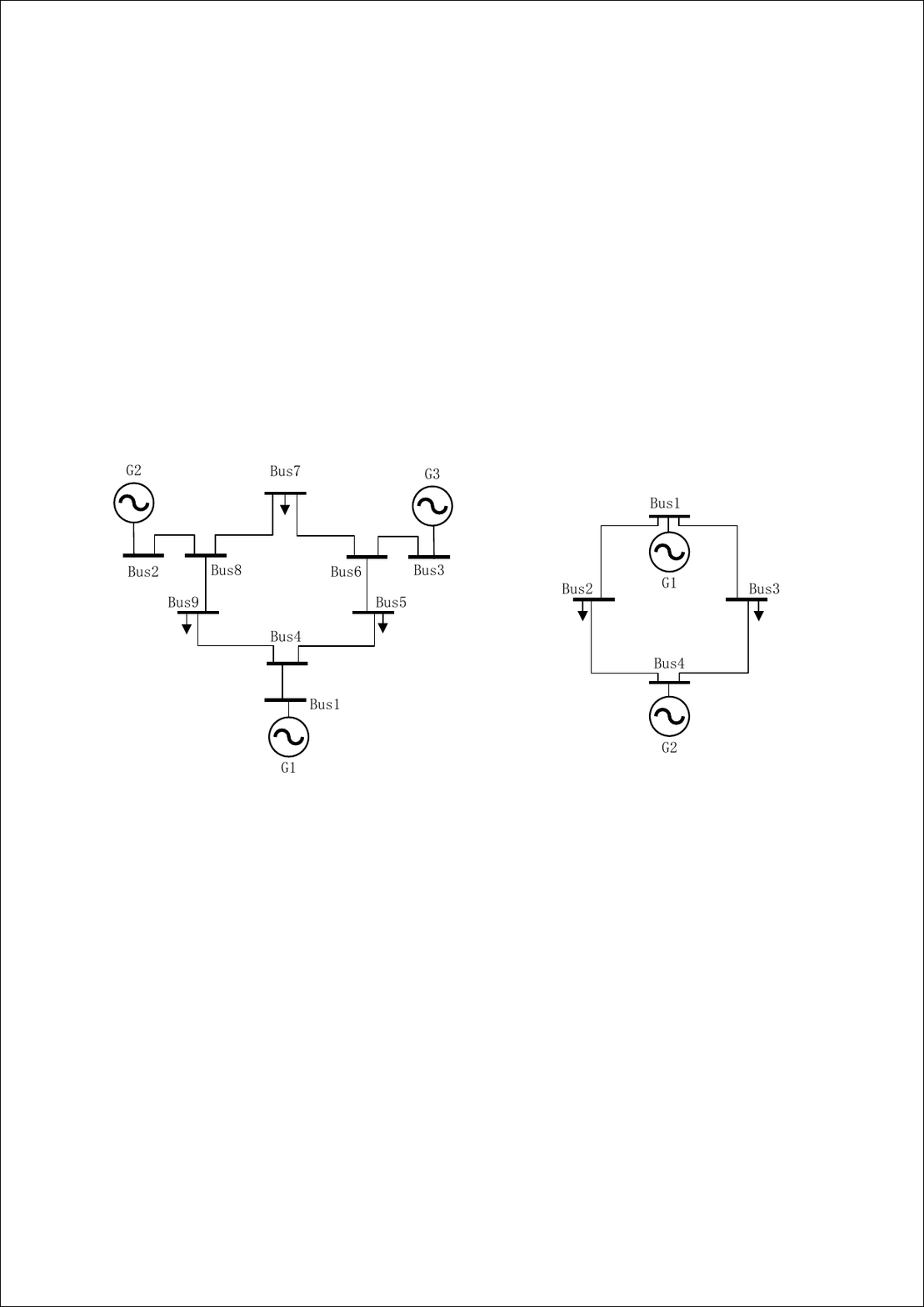}}
        \subfigure[9-bus system]{\label{fig:top9}\includegraphics[width=0.45\columnwidth]{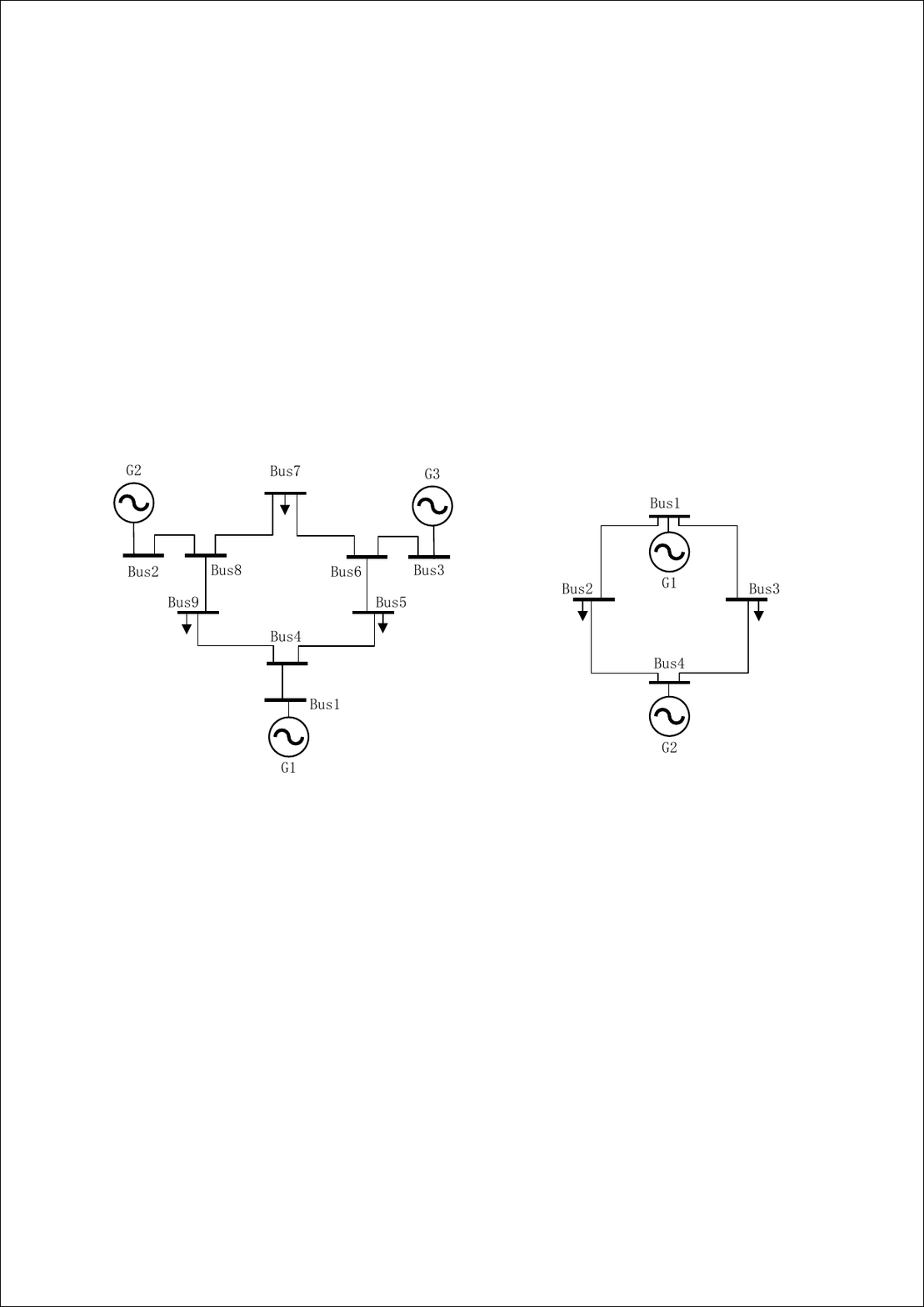}}
	\end{center}
	\caption{Topologies of the studied systems.}
	\label{fig:topology}
\end{figure}

Firstly, local measurements are collected from simulated data with uniform random variations between $\pm 0.05$ p.u. (base power is $100$ MVA) at each nodal power injection. By Corollary~\ref{cor:2}, for the 4-bus system, only 5 data points are used; while the 9-bus system only requires 14 data points. The corresponding voltages are computed from the power flow model. Note that once the simulated data is obtained, we no longer use the power flow model, as the proposed method is completely model-free. In real applications, data from Phasor Measurement Unit (PMU) or Wide Area Management System (WAMS) can be used directly. 

Although the proposed method applies to power flow manifolds in arbitrary dimensions, only 2-dimensional manifolds can be visualized. Thus, we choose to present various 2D power flow surfaces for the testing cases.  
\subsection{Manifold Estimation from Random Local Data}

\begin{figure}[tb!]
	\begin{center}
		\subfigure[]{\label{fig:result_41}\includegraphics[width=0.45\columnwidth]{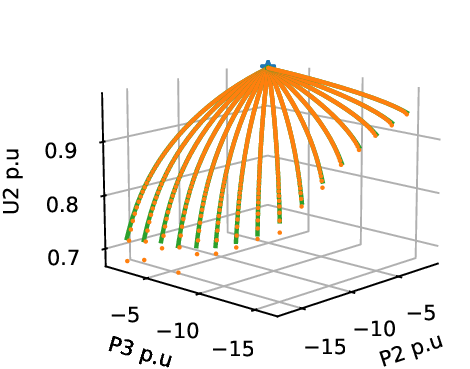}}
        \subfigure[]{\label{fig:result_42}\includegraphics[width=0.45\columnwidth]{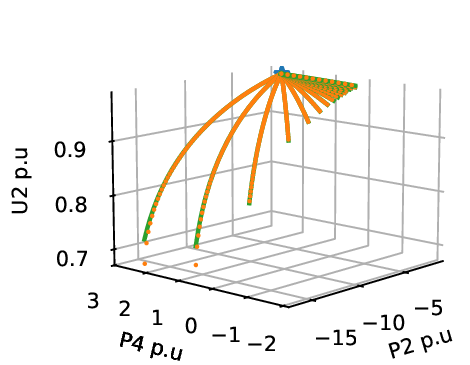}}
	\end{center}
	\caption{Manifold constructions for the 4-bus system (Orange dots: true
manifold, green curves: proposed estimation, 
blue stars: data points used).}
	\label{fig:result4}
\end{figure}

\begin{figure}[tb!]
	\begin{center}		\subfigure{\label{fig:data_4}\includegraphics[width=0.45\columnwidth]{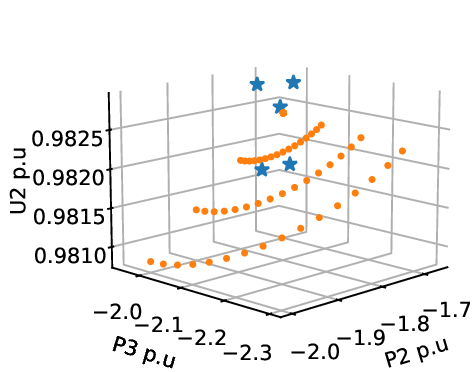}}
    \subfigure{\label{fig:data_4_2}\includegraphics[width=0.45\columnwidth]{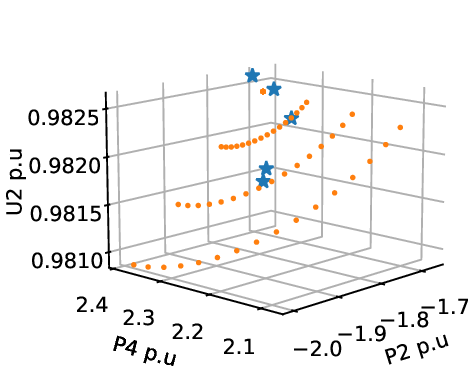}}
	\end{center}
	\caption{Data points used in the 4-bus system (blue stars).}
	\label{fig:data4}
\end{figure}

\begin{figure}[tb!]
	\begin{center}		\subfigure{\label{fig:bound_4}\includegraphics[width=0.45\columnwidth]{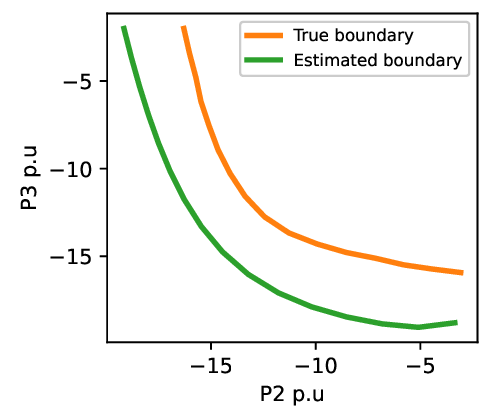}}
    \subfigure{\label{fig:bound_4_2}\includegraphics[width=0.45\columnwidth]{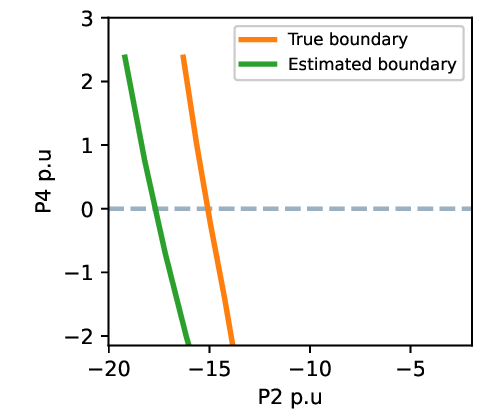}}
	\end{center}
	\caption{Voltage stability boundary of the 4-bus system.}
	\label{fig:bound4}
\end{figure}
In the 4-bus system, Fig.~\ref{fig:result_41} shows the associated manifold when the power injections vary at Bus 2 and Bus 3. Fig.~\ref{fig:result_42} shows the associated manifold when the power injections vary at Bus 2 and Bus 4. 
The orange dots represent the true manifold calculated from the continuation power flow. The green curves denote the outcomes of our proposed method. The blue stars are the locally collected data we used, which is zoomed in in Fig.~\ref{fig:data4}. Fig.~\ref{fig:bound4} shows the voltage stability boundary we estimate by the poles of the Padé approximation in green, while the true stability boundary obtained by continuation is shown in orange.

In the 9-bus system, Fig.~\ref{fig:result_91} shows the associated manifold when the power injections vary at Bus 5 and Bus 7. Fig.~\ref{fig:result_92} shows the associated manifold when the power injections vary at Bus 7 and Bus 9. 
The blue stars in Fig.~\ref{fig:data9} are the discrete local data points used to generate the associated manifold and the boundary. 
Fig.~\ref{fig:bound9} shows the voltage stability boundary we estimate by the poles of the Padé approximation. The color settings in the figures are the same as the 4-bus system plots.

\begin{figure}[tb!]
	\begin{center}	
    \subfigure[]{\label{fig:result_91}\includegraphics[width=0.45\columnwidth]{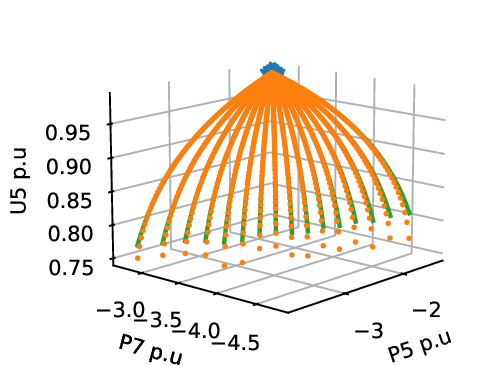}}
        \subfigure[]{\label{fig:result_92}\includegraphics[width=0.45\columnwidth]{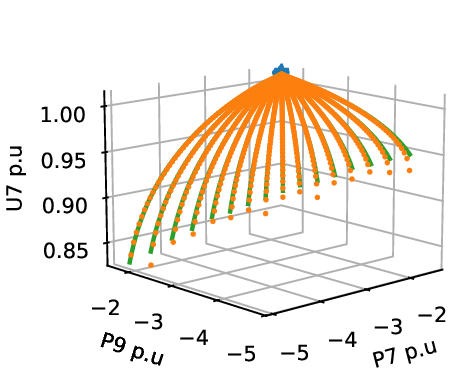}}
	\end{center}
	\caption{Manifold constructions for the 9-bus system (Orange dots: true
manifold, green curves: proposed estimation, 
blue stars: data points used).}
	\label{fig:result9}
\end{figure}

\begin{figure}[tb!]
	\begin{center}
		\subfigure{\label{fig:data_9}\includegraphics[width=0.45\columnwidth]{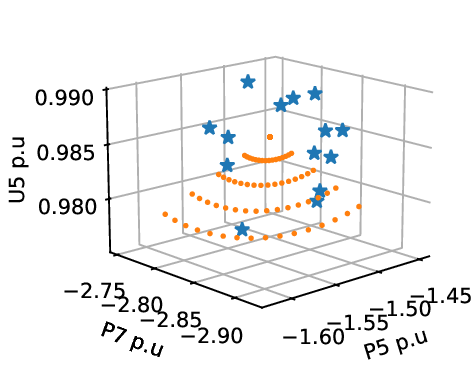}}
        \subfigure{\label{fig:data_9_2}\includegraphics[width=0.45\columnwidth]{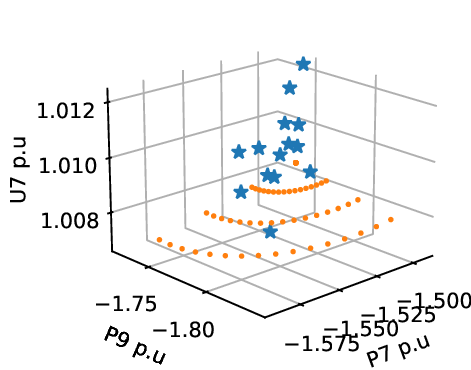}}
	\end{center}
	\caption{Data points used in the 9-bus system (blue stars).}
	\label{fig:data9}
\end{figure}

\begin{figure}[tb!]
	\begin{center}		\subfigure{\label{fig:bound_9}\includegraphics[width=0.45\columnwidth]{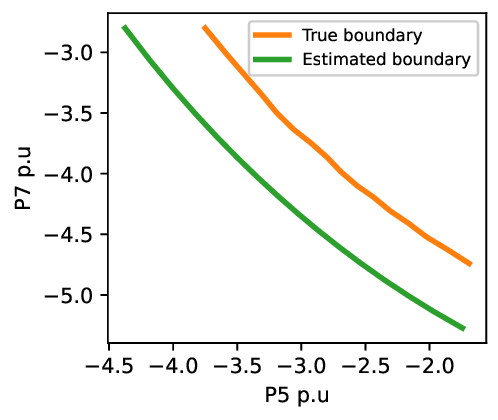}}
    \subfigure{\label{fig:bound_9_2}\includegraphics[width=0.45\columnwidth]{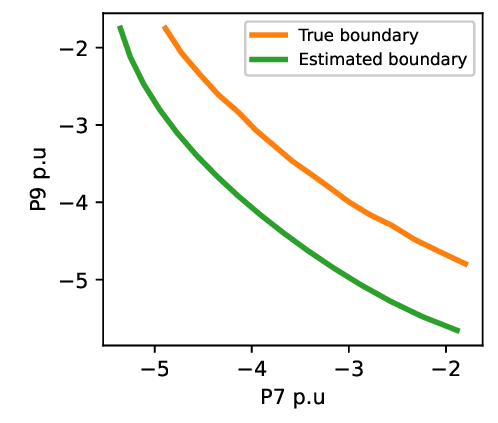}}
	\end{center}
	\caption{Voltage stability boundary of the 9-bus system.}
	\label{fig:bound9}
\end{figure}

As shown in the plots for both cases, the proposed method replicates a wide area of the associated manifold with high fidelity in every direction. Figs.~\ref{fig:result4} and \ref{fig:result9} indicate that the geodesic method demonstrates strong applicability across different systems and at diverse operating points. Figs.~\ref{fig:data4} and \ref{fig:data9} show that the method can achieve high accuracy with a small number of randomly distributed data points within a small local radial patch.
Figs.~\ref{fig:bound4} and \ref{fig:bound9} show the estimated singular boundaries for both systems. At the chosen order of the Padé approximation, the proposed method yields an estimated singular boundary that replicates the shape of the true boundary. Whether the proposed method results in an overestimation or an underestimation would depend on the order of the Padé approximation. 
Considering its model-free nature and the small number of data points used, the proposed method is shown to be a very rigorous and efficient approach. 
\subsection{Impact of Data Ranges on Approximation Errors}

\begin{figure}[tb!]
	\begin{center}		\subfigure[0.01 p.u. patch range]{\label{fig:case4_0.01}\includegraphics[width=0.45\columnwidth]{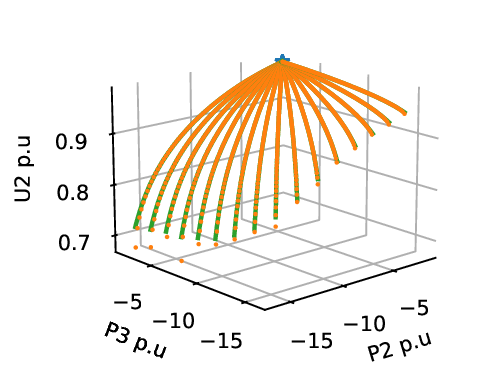}}
    \subfigure[0.05 p.u. patch range]{\label{fig:case4_0.05}\includegraphics[width=0.45\columnwidth]{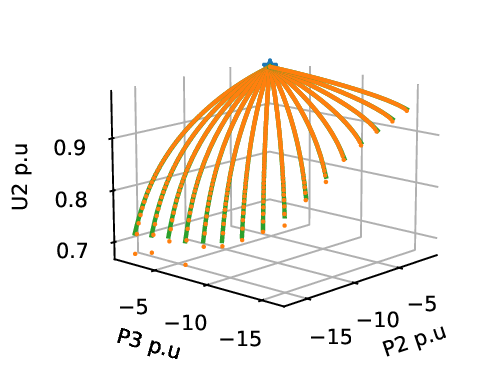}}
	\end{center}
    \begin{center}	
    \subfigure[2 p.u. patch range]{\label{fig:case4_l2}\includegraphics[width=0.45\columnwidth]{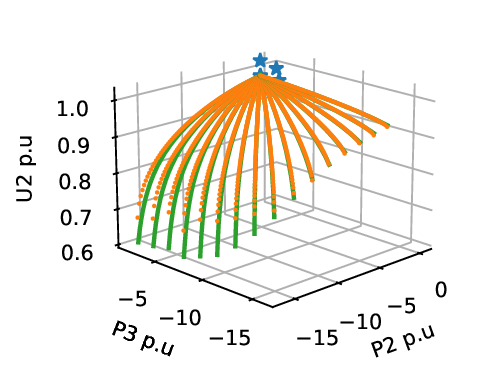}}
    \subfigure[Calculated by model]{\label{fig:case4_og}\includegraphics[width=0.45\columnwidth]{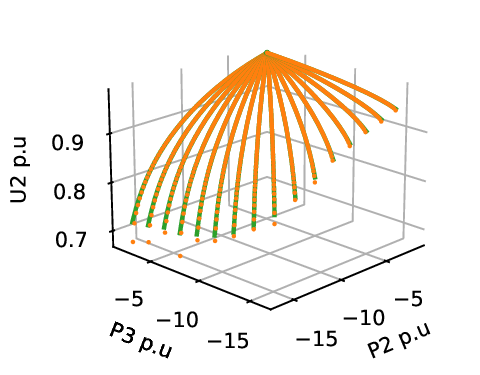}}
	\end{center}
	\caption{Manifold constructions via different patch ranges  for the 4-bus system.}
	\label{fig:length4}
\end{figure}

\begin{figure}[tb!]
	\begin{center}		\subfigure[0.01 p.u. patch range]{\label{fig:case9_0.01}\includegraphics[width=0.45\columnwidth]{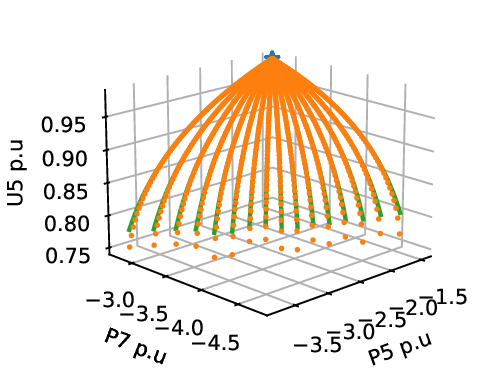}}
    \subfigure[0.05 p.u. patch range]{\label{fig:case9_0.05}\includegraphics[width=0.45\columnwidth]{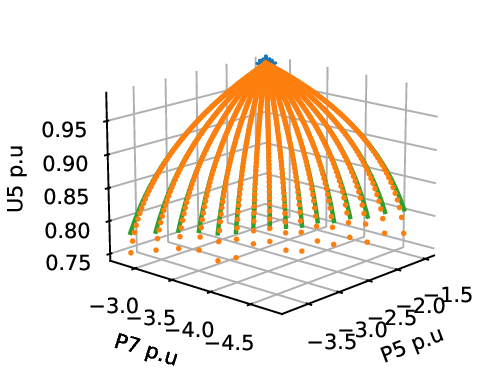}}
	\end{center}
    \begin{center}	
    \subfigure[0.5 p.u. patch range]{\label{fig:case9_0.5}\includegraphics[width=0.45\columnwidth]{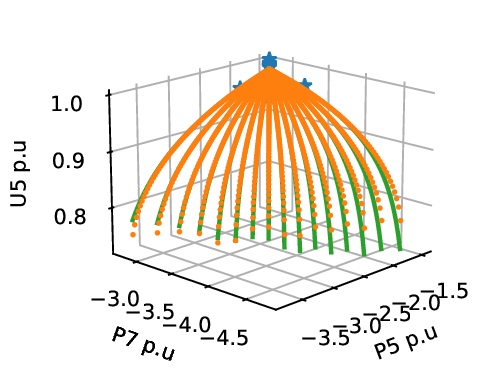}}
    \subfigure[Calculated by model]{\label{fig:case9_og}\includegraphics[width=0.45\columnwidth]{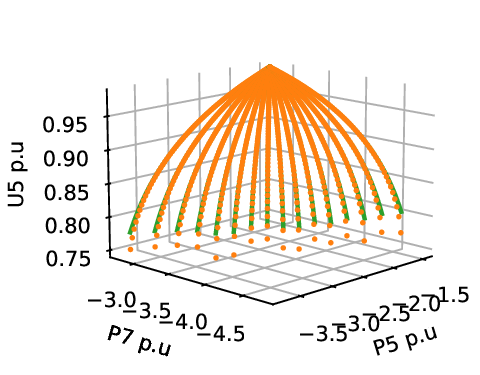}}
	\end{center}
	\caption{Manifold constructions via different patch ranges for the 9-bus system.}
	\label{fig:length9}
\end{figure}

\begin{figure}[tb!]
	\begin{center}		\subfigure[0.01 p.u. patch range]{\label{fig:case4e_0.01}\includegraphics[width=0.45\columnwidth]{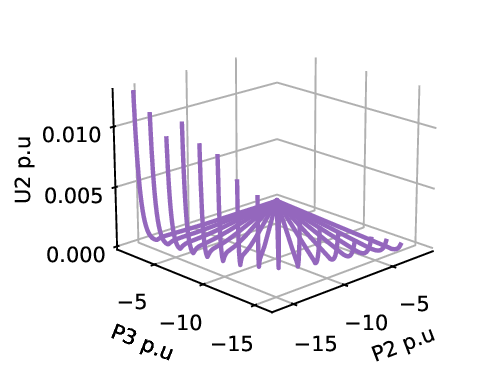}}
    \subfigure[0.05 p.u. patch range]{\label{fig:case4e_0.05}\includegraphics[width=0.45\columnwidth]{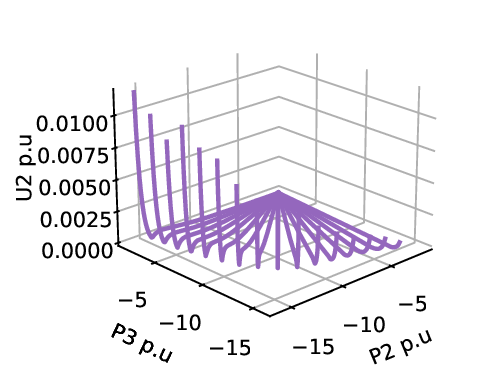}}
	\end{center}
    \begin{center}
    \subfigure[2 p.u. patch range]{\label{fig:case4e_2}\includegraphics[width=0.45\columnwidth]{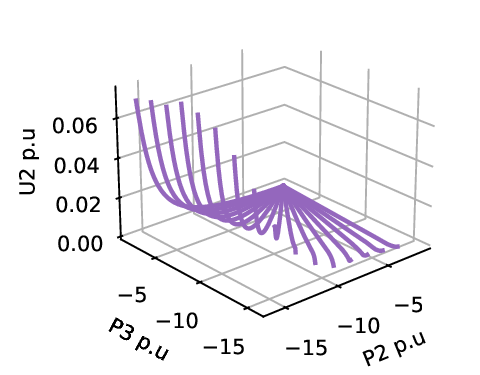}}
    \subfigure[Calculated by model]{\label{fig:case4e_og}\includegraphics[width=0.45\columnwidth]{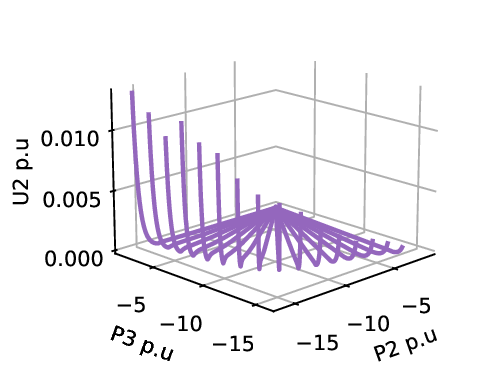}}
	\end{center}
	\caption{Error comparison of different patch ranges for the 4-bus system.}
	\label{fig:lengthe4}
\end{figure}

\begin{figure}[tb!]
	\begin{center}		\subfigure[0.01 p.u. patch range]{\label{fig:case9e_0.01}\includegraphics[width=0.45\columnwidth]{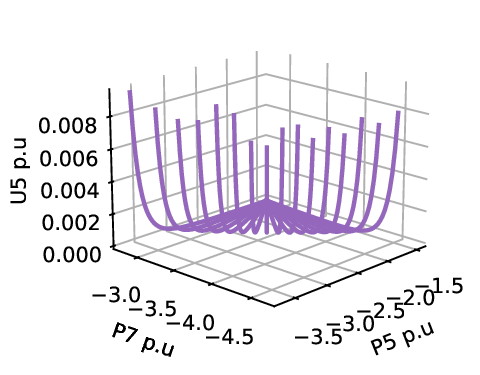}}
    \subfigure[0.05 p.u. patch range]{\label{fig:case9e_0.05}\includegraphics[width=0.45\columnwidth]{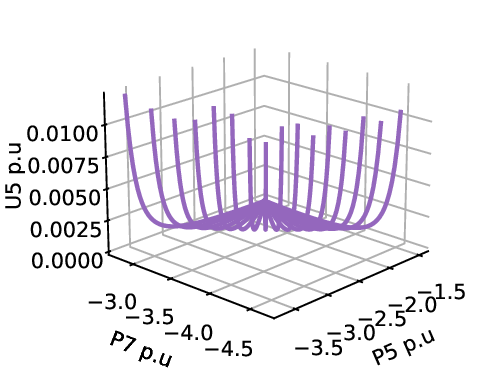}}
	\end{center}
    \begin{center}
    \subfigure[0.5 p.u. patch range]{\label{fig:case9e_0.5}\includegraphics[width=0.45\columnwidth]{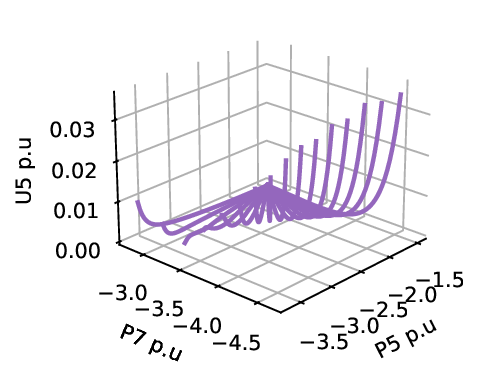}}
    \subfigure[Calculated by model]{\label{fig:case9e_og}\includegraphics[width=0.45\columnwidth]{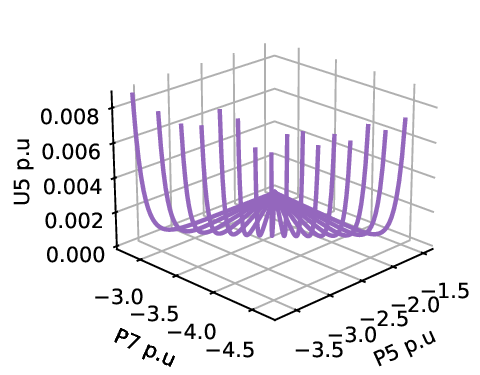}}
	\end{center}
	\caption{Error comparison of different patch ranges for the 9-bus system.}
	\label{fig:lengthe9}
\end{figure}

First, we fix the required number of power variation directions. For the 4-bus system, four independent directions are selected. For the 9-bus system, 13 directions are selected. Then, we vary the radius of the data patch from a small value at $0.01$ p.u. and continuously increase the radius. The required data points are collected on the boundary of the patch as the radius grows. Finally, we apply the proposed data-based approach to these data points and perform the manifold approximation. The respective manifolds of the two systems are shown in Fig.~\ref{fig:length4} and \ref{fig:length9}. Their respective errors are depicted in Fig.~\ref{fig:lengthe4} and \ref{fig:lengthe9}. The average approximation errors are summarized in Table~\ref{tab:table1} and \ref{tab:table3}; while the maximum approximation errors are listed in Table~\ref{tab:table2} and \ref{tab:table4}. 

As observed from the aforementioned figures and tables, the proposed data-based method maintains high accuracy across $80\%$ (even $95\%$ in some cases) of the stability region. This is particularly true when the patch radius is less than $0.05$ p.u., which corresponds to a power variation of $5$ MVA under the 100MVA base. 

To reveal the main contributor of these errors, we include the results of the model-based approximation, namely, calculating the Jacobian matrix not from the data formula \eqref{eq:ca_partial} but from the power flow equation \eqref{eq:cpq} and \eqref{eq:basis}. The corresponding visualizations can be found in the sub-figure (d) of Fig.~\ref{fig:length4} to Fig.~\ref{fig:lengthe9}; while the error values are presented in the last row of Table~\ref{tab:table1} to \ref{tab:table4}. These results illustrate that the magnitudes of errors from the data-based approximation remain the same as the magnitudes of errors from the model-based approximation. It suggests that the main contributor of these errors comes from the limited truncation of the Taylor (thus, Padé) approximation. 

\begin{table}[tb!]
	\begin{center}
		\caption{ Average voltage error in the 4-bus system}
		\label{tab:table1}
		\begin{tabular}{cccc}
			\Xhline{1.5pt}
            \multirow{2}{*}{\makecell[c]{Radius \\ of patch}} 
& \multicolumn{3}{c}{\makecell[c]{Percentage distance to boundary}} \\
& 80\% & 95\% & 99\%
\\ \Xhline{1pt} 
			\multirow{2}{*}{\makecell[c]{0.01}} & 
            \multirow{2}{*}{\makecell[c]{$1.7\times 10^{-4}$}} & \multirow{2}{*}{\makecell[c]{$1.9\times 10^{-3}$}} & \multirow{2}{*}{\makecell[c]{$1.2\times 10^{-2}$}} \\ 
                & & & \\
			\multirow{2}{*}{\makecell[c]{0.05}} & 
            \multirow{2}{*}{\makecell[c]{$3.2\times 10^{-4}$}} & \multirow{2}{*}{\makecell[c]{$1.5\times 10^{-3}$}} & \multirow{2}{*}{\makecell[c]{$1.2\times 10^{-2}$}} \\ 
                & & & \\
                \multirow{2}{*}{\makecell[c]{2.0}} & 
            \multirow{2}{*}{\makecell[c]{$8.8\times 10^{-3}$}} & \multirow{2}{*}{\makecell[c]{$2.7\times 10^{-2}$}} & \multirow{2}{*}{\makecell[c]{$3.8\times 10^{-2}$}} \\ 
                & & & \\
                \multirow{2}{*}{\makecell[c]{By Model}} & 
             \multirow{2}{*}{\makecell[c]{$1.4\times 10^{-4}$}} & \multirow{2}{*}{\makecell[c]{$2.0\times 10^{-3}$}} & \multirow{2}{*}{\makecell[c]{$1.3\times 10^{-2}$}} \\  
                & & & \\
			\Xhline{1.5pt}
		\end{tabular}
	\end{center}
\end{table}

\begin{table}[tb!]
	\begin{center}
		\caption{ Max voltage error in the 4-bus system}
		\label{tab:table2}
		\begin{tabular}{cccc}
			\Xhline{1.5pt}
            \multirow{2}{*}{\makecell[c]{Radius \\ of patch}} 
& \multicolumn{3}{c}{\makecell[c]{Percentage distance to boundary}} \\
& 80\% & 95\% & 99\%
\\ \Xhline{1pt} 
			\multirow{2}{*}{\makecell[c]{0.01}} & 
            \multirow{2}{*}{\makecell[c]{$3.6\times 10^{-4}$}} & \multirow{2}{*}{\makecell[c]{$4.7\times 10^{-3}$}} & \multirow{2}{*}{\makecell[c]{$2.4\times 10^{-2}$}} \\ 
                & & & \\
			\multirow{2}{*}{\makecell[c]{0.05}} & 
            \multirow{2}{*}{\makecell[c]{$5.6\times 10^{-4}$}} & \multirow{2}{*}{\makecell[c]{$3.9\times 10^{-3}$}} & \multirow{2}{*}{\makecell[c]{$2.3\times 10^{-2}$}}  \\ 
                & & & \\
                \multirow{2}{*}{\makecell[c]{2.00}} & 
            \multirow{2}{*}{\makecell[c]{$1.6\times 10^{-2}$}} & \multirow{2}{*}{\makecell[c]{$1.6\times 10^{-2}$}} & \multirow{2}{*}{\makecell[c]{$8.3\times 10^{-2}$}} \\ 
                & & & \\
                \multirow{2}{*}{\makecell[c]{By Model}} & 
             \multirow{2}{*}{\makecell[c]{$3.0\times 10^{-4}$}} & \multirow{2}{*}{\makecell[c]{$4.9\times 10^{-3}$}} & \multirow{2}{*}{\makecell[c]{$2.5\times 10^{-2}$}} \\ 
                & & & \\
			\Xhline{1.5pt}
		\end{tabular}
	\end{center}
\end{table}

\begin{table}[tb!]
	\begin{center}
		\caption{ Average voltage error in the 9-bus system}
		\label{tab:table3}
		\begin{tabular}{cccc}
			\Xhline{1.5pt}
            \multirow{2}{*}{\makecell[c]{Radius \\ of patch}} 
& \multicolumn{3}{c}{\makecell[c]{Percentage distance to boundary}} \\
& 80\% & 95\% & 99\%
\\ \Xhline{1pt} 
			\multirow{2}{*}{\makecell[c]{0.01}} & 
            \multirow{2}{*}{\makecell[c]{$8.2\times 10^{-4}$}} & \multirow{2}{*}{\makecell[c]{$9.3\times 10^{-3}$}} & \multirow{2}{*}{\makecell[c]{$2.9\times 10^{-2}$}} \\ 
                & & & \\
			\multirow{2}{*}{\makecell[c]{0.05}} & 
            \multirow{2}{*}{\makecell[c]{$1.9\times 10^{-3}$}} & \multirow{2}{*}{\makecell[c]{$1.2\times 10^{-2}$}} & \multirow{2}{*}{\makecell[c]{$3.2\times 10^{-2}$}} \\ 
                & & & \\
                \multirow{2}{*}{\makecell[c]{0.50}} & 
            \multirow{2}{*}{\makecell[c]{$5.1\times 10^{-3}$}} & \multirow{2}{*}{\makecell[c]{$2.1\times 10^{-2}$}} & \multirow{2}{*}{\makecell[c]{$2.6\times 10^{-2}$}} \\ 
                & & & \\
                \multirow{2}{*}{\makecell[c]{By Model}} & 
             \multirow{2}{*}{\makecell[c]{$5.4\times 10^{-4}$}} & \multirow{2}{*}{\makecell[c]{$8.6\times 10^{-3}$}} & \multirow{2}{*}{\makecell[c]{$2.8\times 10^{-2}$}} \\  
                & & & \\
			\Xhline{1.5pt}
		\end{tabular}
	\end{center}
\end{table}

\begin{table}[tb!]
	\begin{center}
		\caption{ Max voltage error in the 9-bus system}
		\label{tab:table4}
		\begin{tabular}{cccc}
			\Xhline{1.5pt}
            \multirow{2}{*}{\makecell[c]{Radius \\ of patch}} 
& \multicolumn{3}{c}{\makecell[c]{Percentage distance to boundary}} \\
& 80\% & 95\% & 99\%
\\ \Xhline{1pt} 
			\multirow{2}{*}{\makecell[c]{0.01}} & 
            \multirow{2}{*}{\makecell[c]{$9.9\times 10^{-4}$}} & \multirow{2}{*}{\makecell[c]{$1.1\times 10^{-2}$}} & \multirow{2}{*}{\makecell[c]{$3.8\times 10^{-2}$}} \\ 
                & & & \\
			\multirow{2}{*}{\makecell[c]{0.05}} & 
            \multirow{2}{*}{\makecell[c]{$2.2\times 10^{-3}$}} & \multirow{2}{*}{\makecell[c]{$1.5\times 10^{-2}$}} & \multirow{2}{*}{\makecell[c]{$4.1\times 10^{-2}$}}  \\ 
                & & & \\
                \multirow{2}{*}{\makecell[c]{0.50}} & 
            \multirow{2}{*}{\makecell[c]{$8.5\times 10^{-3}$}} & \multirow{2}{*}{\makecell[c]{$3.6\times 10^{-2}$}} & \multirow{2}{*}{\makecell[c]{$5.1\times 10^{-2}$}} \\ 
                & & & \\
                \multirow{2}{*}{\makecell[c]{By Model}} & 
             \multirow{2}{*}{\makecell[c]{$6.8\times 10^{-4}$}} & \multirow{2}{*}{\makecell[c]{$1.1\times 10^{-2}$}} & \multirow{2}{*}{\makecell[c]{$3.7\times 10^{-2}$}} \\ 
                & & & \\
			\Xhline{1.5pt}
		\end{tabular}
	\end{center}
\end{table}


\section{Conclusion}
This paper proposed a data-based method inspired by differential geometry that can evaluate any point in the high-dimensional power flow manifold from a limited number of local measurements around a single operating point. 
We first establish a theoretical framework based on differential geometry to characterize the inverse map of a bijection $r$ from manifolds $X$ to $Y$. The framework involves constructing geodesic equations on the manifold $Y$ with respect to $X$, deriving their solution in rational function form, and thereby providing an explicit description of the inverse map. Subsequently, we demonstrate that the classical power flow problem in electrical engineering can be formulated as such a bijection $r$ between manifolds, thereby bridging the proposed geometric framework with practical engineering applications. Then, we showed that all the higher-order partial derivatives of the power flow problem can be completely determined from the first-order partial derivatives. By combining these results, we are able to construct the analytic representation of any geodesic curve in the associated power flow manifold just from the first-order derivatives. Finally, a data-based oracle is proposed to estimate these first-order derivatives. The minimum number of data points required is ${2N_l+N_g+1}$, including the given operating point. Numerical examples in arbitrary power changing situations verified the efficacy of the proposed method. 

Future research include applying the method to large-scale power systems and the extension to hybrid HVDC-AC systems and systems with saturations.



\appendix

\subsection{Proof of Lemma~\ref{lemma:geo}}\label{append:1}
\begin{proof}
First, We prove that $\Omega_k$ is a polynomial function in the partial derivatives of $r$ whose orders are no greater than $k$, the velocity $\dot{x}^j$ for all $j$ and $g^{il}$ for all $i$ and $l$:
\begin{equation}\label{eq:polyOk}
     \Omega_k = \Omega_k \big(\bm r_{i_{11}},\bm r_{i_{21} i_{22}},\dots,\bm r_{i_{k1} \dots i_{kk}},\dot{x}^j,g^{il} \big).
\end{equation} 
We prove the statement by mathematical induction.

\textit{Base case:}
First, we show that for $k = 2$, $\Omega_k$ is a polynomial function in the partial derivatives of $r$ whose orders are no greater than $k$, the velocity $\dot{x}^j$ for all $j$ and $g^{il}$ for all $il$.

Substitute \eqref{eq:cris} into \eqref{eq:geodesic},
\begin{equation}\label{eq:taylor2}
    \Omega_2=\frac{1}{2}\ddot{x}^l =\frac{1}{2}\langle \bm r_{ij},\bm r_{k} \rangle g^{ml}\dot{x}^i\dot{x}^j, 
\end{equation}
where $\ddot{x}^m$=${d^2x^m}/{d\lambda^2}$.

Thus, the base case holds.

\textit{Inductive step:}
Assume that the statement holds for some arbitrary integer $k \ge 2$;
\begin{equation}\label{eq:taylorn}
    \Omega_k = \Omega_k \big(\bm r_{i_{11}},\bm r_{i_{21} i_{22}},\dots,\bm r_{i_{k1} \dots i_{kk}},\dot{x}^j,g^{il} \big).
\end{equation} 

We now show that the statement also holds for $k+1$.
\begin{subequations}
\begin{align}\label{eq:taylorn1}
\Omega_{k+1} =&\frac{\dot{\Omega}_k}{k+1}
  \\=& \bigg(\frac{\partial \Omega_k}{\partial \bm r_{i_1}}\frac{\partial \bm r_{i_1}}{\partial x^{i_{k+1}}}
  +\dots+\frac{\partial \Omega_k}{\partial \bm r_{i_1\dots i_k}}\frac{\partial \bm r_{i_1\dots i_k}}{\partial x^{i_{k+1}}}\nonumber
  \\&+\frac{\partial \Omega_k}{\partial g^{il}}\frac{\partial g^{il}}{\partial x^{i_{k+1}}}\bigg)\dot{x}^{i_{k+1}}+\frac{\partial \Omega_k}{\partial \dot{x}^{j}}\frac{\partial \dot{x}^{j}}{\partial \lambda}, 
  \\=& \bigg(\frac{\partial \Omega_k}{\partial \bm r_{i_1}}\bm r_{i_1i_{k+1}}
  +\dots+\frac{\partial \Omega_k}{\partial \bm r_{i_1\dots i_k}}\bm r_{i_1\dots i_{k+1}}\nonumber
  \\&+\frac{\partial \Omega_k}{\partial g^{il}}\frac{\partial g^{il}}{\partial x^{i_{k+1}}}\bigg)\dot{x}^{i_{k+1}}+\frac{\partial \Omega_k}{\partial \dot{x}^{j}}\ddot{x}^{j},
\end{align}
\end{subequations}
where the subscript $i_{k+1}$ is index variable taking values over the set of integers from $1$ to $n$.

Since $\Omega_k$ is a polynomial function, the partial derivatives of $\Omega_k$ are also polynomial functions, and according to \eqref{eq:taylor2}, $\ddot{x}^{j}$ is also a polynomial function. Thus, $\Omega_{k+1}$ is a polynomial function, 
\begin{equation}
    \Omega_{k+1} = \Omega_{k+1}\big(\bm r_{i_{11}},\bm r_{i_{21} i_{22}},\dots,\bm r_{i_{k1} \dots i_{kk}i_{k+1}},\dot{x}^j,g^{il} \big).
\end{equation}
And according to \eqref{eq:ginv}, $g^{il}$ is a rational function of $r_{i_{11}}$, $\Omega_{k+1}$ is a rational function of partial derivatives of $r$ whose orders are no greater than $k$ and by the velocity $\dot{x}^j$ for all $j$:
\begin{equation}
    \Omega_{k+1} = \Omega_{k+1}\big(\bm r_{i_{11}},\bm r_{i_{21} i_{22}},\dots,\bm r_{i_{k1} \dots i_{kk}i_{k+1}},\dot{x}^j \big).
\end{equation}
\end{proof}

\subsection{Proof of Lemma~\ref{th:derive}}\label{append:2}
\begin{proof}
    We prove the statement by mathematical induction.

\textit{Base case:}
First, we show that for the second-order partial derivatives of $(P,Q)$ with respect to $(V,\delta)$, namely, $k=2$, are determined by the first-order partial derivatives and the given operating point. 

Let's consider a general second-order term $\partial^2 y^i / \partial x^j \partial x^l$ where $y^i = P_i$ or $Q_i$, $x^j = V_j$ or $\delta_j$, $x^l = V_l$ or $\delta_l$. We have:
\\ if $i\neq j$ and $i \neq l$, 
\begin{subequations}\label{eq:2nd_derivei_j_k}
    \begin{align}
     \frac{\partial^2 P_i}{\partial V_j\partial V_l} =
    \frac{\partial^2 P_i}{\partial V_j\partial \delta_l} =
     \frac{\partial^2 P_i}{\partial \delta_j\partial\delta_l}=0
    \\
    \frac{\partial^2 Q_i}{\partial V_j\partial V_l} =
    \frac{\partial^2 Q_i}{\partial V_j\partial \delta_l} =
     \frac{\partial^2 Q_i}{\partial \delta_j\partial\delta_l}=0;
    \end{align}
\end{subequations}
if $i \neq j $ but $ j=l$, 
\begin{subequations}\label{eq:2nd_derivei_jk}
    \begin{align}
        \frac{\partial^2 P_i}{\partial \delta_j^2}&=\frac{\partial Q_i }{\partial\delta_j}
    \\
     \frac{\partial^2 Q_i}{\partial \delta_j^2}&=-\frac{\partial P_i }{\partial\delta_j}
    \\
    \frac{\partial^2 P_i}{\partial V_j^2}&=
     \frac{\partial^2 Q_i}{\partial V_j^2}=0; 
    \end{align}
\end{subequations}
if $i=l $ but $i \neq j $ (or $ i = j $ but $ i\neq l$), 
\begin{subequations}\label{eq:2nd_deriveij_k}
    \begin{align}
     &\frac{-V_iV_j\partial^2 P_i}{\partial V_i\partial V_j} = \frac{-\partial^2 P_i}{\partial \delta_i\partial\delta_j} = \frac{V_i\partial^2 Q_i}{\partial V_i\partial \delta_j} = \frac{-V_j\partial^2 Q_i}{\partial V_j\partial \delta_i} \nonumber\\
     &= \frac{\partial Q_i}{\partial \delta_j}\\
     & \frac{V_iV_j\partial^2 Q_i}{\partial \delta_i\partial\delta_j} = \frac{\partial^2 Q_i}{\partial \delta_i\partial\delta_j} = \frac{V_i\partial^2 P_i}{\partial V_i\partial \delta_j} =  \frac{-V_j\partial^2 P_i}{\partial V_j\partial \delta_i} \nonumber\\
     &= \frac{\partial P_i }{\partial\delta_j} 
    \end{align}
\end{subequations}
if $i=j=l$, 
\begin{subequations}
\begin{align}
    &{P_i} - \frac{V_i^2 \partial^2 P_i}{2\partial V_i^2} = \frac{-\partial^2 P_i}{\partial \delta_i^2} = V_i \frac{\partial^2 Q_i}{\partial V_i\partial \delta_i} = \frac{\partial Q_i}{\partial \delta _i} \\
    &\frac{V_i^2 \partial^2 Q_i}{2 \partial V_i^2} - Q_i = \frac{\partial^2 Q_i}{\partial \delta_i^2} = V_i \frac{\partial^2 P_i}{\partial V_i\partial \delta_i} = \frac{\partial P_i }{\partial\delta_i} 
\end{align}
\end{subequations}

Thus, the base case holds.
\begin{equation}\label{eq:rational_2}
    \tilde{\bm r}_{i_1i_{2}}= A_2 \tilde{\bm r}_{j_1}+B_2P_{j_2}+C_2Q_{j_3}
\end{equation}

\textit{Inductive step:}
Assume that the statement holds for some arbitrary integer $k \ge 2$,
\begin{equation}\label{eq:rational_n}
      \tilde{\bm r}_{i_1 \dots i_{k}}=A_k \tilde{\bm r}_{j_1}+B_kP_{j_2}+C_kQ_{j_3}.
\end{equation}

We now show that the statement also holds for $k+1$.
\begin{subequations}\label{eq:rational_ABCk}
\begin{align}
\tilde{\bm r}_{i_1 \dots i_{k+1}}&=\frac{\partial\tilde{\bm r}_{i_1 \dots i_{k}}}{\partial x^{i_{k+1}}}
\\&=\frac{\partial A_k}{\partial x^{i_{k+1}}}\tilde{\bm r}_{j_1}+A_k\tilde{\bm r}_{j_1i_{k+1}}+\frac{\partial B_k}{\partial x^{i_{k+1}}} P_{j_2}\nonumber
\\&+B_k\frac{\partial P_{j_2}}{\partial x^{i_{k+1}}}+\frac{\partial C_k}{\partial x^{i_{k+1}}} Q_{j_3}+C_k\frac{\partial Q_{j_3}}{\partial x^{i_{k+1}}}.
\end{align}
\end{subequations}

Because $A_k$, $B_k$, $C_k$ are the rational functions of $V_{j_4}$, ${\partial A_k}/{\partial x^{i_{k+1}}}$,${\partial B_k}/{\partial x^{i_{k+1}}}$,${\partial C_k}/{\partial x^{i_{k+1}}}$ are also the rational functions of $V_{j_4}$,
\begin{equation}
    \tilde{\bm r}_{i_1 \dots i_{k+1}}=A_k\tilde{\bm r}_{j_1i_{k+1}}+\hat{A}_{k+1}\tilde{\bm r}_{j_1}+\hat B_{k+1}P_{j_2}+\hat C_{k+1}Q_{j_3},
\end{equation}
where $\hat{A}_{k+1}$, $\hat{B}_{k+1}$, and $\hat{C}_{k+1}$ are rational functions of $V_{j_4}$.

Substitute \eqref{eq:rational_2} into \eqref{eq:rational_ABCk},
\begin{subequations}\label{eq:rational_ABCr}
\begin{align}
\tilde{\bm r}_{i_1 \dots i_{k+1}}&=(\hat{A}_{k+1}+A_kA_2)\tilde{\bm r}_{j_1}+(\hat B_{k+1}+A_kB_2)P_{j_2}\nonumber\\&+(\hat C_{k+1}+A_kC_2)Q_{j_3},  
\\&=A_{k+1}\tilde{\bm r}_{j_1}+B_{k+1}P_{j_2}+C_{k+1}Q_{j_3}.
\end{align}
\end{subequations}
\end{proof}

\bibliographystyle{ieeetr}
\bibliography{ref_article}

@article{stahl1997:convergence,
	title={The convergence of {Pad{\'e}} approximants to functions with branch points},
	author={Stahl, Herbert},
	journal={Journal of Approximation Theory},
	volume={91},
	number={2},
	pages={139--204},
	year={1997},
	publisher={New York, Academic Press.}
}

@article{stahl1989:convergence,
	title={On the convergence of generalized {Pad{\'e}} approximants},
	author={Stahl, Herbert},
	journal={Constructive Approximation},
	volume={5},
	number={1},
	pages={221--240},
	year={1989},
	publisher={Springer}
}

@inproceedings{lesieutre2015:efficient,
	title={An efficient method to locate all the load flow solutions-revisited},
	author={Lesieutre, Bernard and Wu, Dan},
	booktitle={Communication, Control, and Computing (Allerton), 2015 53rd Annual Allerton Conference on},
	pages={381--388},
	year={2015},
	organization={IEEE}
}

@ARTICLE{tri_sec,
  author={Wu, Dan and Wang, Bin and Wolter, Franz-Erich and Xie, Le},
  journal={IEEE Transactions on Power Systems}, 
  title={Tri-Sectional Approximation of the Shortest Path to Long-Term Voltage Stability Boundary With Distributed Energy Resources}, 
  year={2022},
  volume={37},
  number={6},
  pages={4720-4731},
  doi={10.1109/TPWRS.2022.3154708}}

@ARTICLE{VS_margins,
  author={Molzahn, Daniel K. and Lesieutre, Bernard C. and DeMarco, Christopher L.},
  journal={IEEE Transactions on Power Systems}, 
  title={A Sufficient Condition for Power Flow Insolvability With Applications to Voltage Stability Margins}, 
  year={2013},
  volume={28},
  number={3},
  pages={2592-2601},
  doi={10.1109/TPWRS.2012.2233765}}

@article{QV,
title = {Reactive power-voltage problem: conditions for the existence of solution and localized disturbance propagation},
journal = {International Journal of Electrical Power \& Energy Systems},
volume = {8},
number = {2},
pages = {66-74},
year = {1986},
issn = {0142-0615},
doi = {https://doi.org/10.1016/0142-0615(86)90001-3},
url = {https://www.sciencedirect.com/science/article/pii/0142061586900013},
author = {J. Thorp and D. Schulz and M. Ilić-Spong}
}

@ARTICLE{radialS,
  author={Chiang, H.-D. and Baran, M.E.},
  journal={IEEE Transactions on Circuits and Systems}, 
  title={On the existence and uniqueness of load flow solution for radial distribution power networks}, 
  year={1990},
  volume={37},
  number={3},
  pages={410-416},
  doi={10.1109/31.52734}}

@ARTICLE{sol,
  author={Lesieutre, B.C. and Sauer, P.W. and Pai, M.A.},
  journal={IEEE Transactions on Circuits and Systems I: Fundamental Theory and Applications}, 
  title={Existence of solutions for the network/load equations in power systems}, 
  year={1999},
  volume={46},
  number={8},
  pages={1003-1011},
  doi={10.1109/81.780380}}

@ARTICLE{fes_sol,
  author={Miu, K.N. and Hsiao-Dong Chiang},
  journal={IEEE Transactions on Circuits and Systems I: Fundamental Theory and Applications}, 
  title={Existence, uniqueness, and monotonic properties of the feasible power flow solution for radial three-phase distribution networks}, 
  year={2000},
  volume={47},
  number={10},
  pages={1502-1514},
  doi={10.1109/81.886980}}

@ARTICLE{Linear_App,
  author={Bolognani, Saverio and Zampieri, Sandro},
  journal={IEEE Transactions on Power Systems}, 
  title={On the Existence and Linear Approximation of the Power Flow Solution in Power Distribution Networks}, 
  year={2016},
  volume={31},
  number={1},
  pages={163-172},
  doi={10.1109/TPWRS.2015.2395452}}

@ARTICLE{Inner_App,
  author={Nguyen, Hung D. and Dvijotham, Krishnamurthy and Turitsyn, Konstantin},
  journal={IEEE Transactions on Power Systems}, 
  title={Constructing Convex Inner Approximations of Steady-State Security Regions}, 
  year={2019},
  volume={34},
  number={1},
  pages={257-267},
  doi={10.1109/TPWRS.2018.2868752}}

@ARTICLE{Feas_Sets,
  author={Lee, Dongchan and Nguyen, Hung D. and Dvijotham, Krishnamurthy and Turitsyn, Konstantin},
  journal={IEEE Transactions on Control of Network Systems}, 
  title={Convex Restriction of Power Flow Feasibility Sets}, 
  year={2019},
  volume={6},
  number={3},
  pages={1235-1245},
  doi={10.1109/TCNS.2019.2930896}}

@article{cui2019solvability,
  title={Solvability of power flow equations through existence and uniqueness of complex fixed point},
  author={Cui, Bai and Sun, Xu Andy},
  journal={arXiv preprint arXiv:1904.08855},
  year={2019}
}

@ARTICLE{DL,
  author={Zhang, Yichen and Liu, Jianzhe and Qiu, Feng and Hong, Tianqi and Yao, Rui},
  journal={Journal of Modern Power Systems and Clean Energy}, 
  title={Deep Active Learning for Solvability Prediction in Power Systems}, 
  year={2022},
  volume={10},
  number={6},
  pages={1773-1777},
  doi={10.35833/MPCE.2021.000424}}

@INPROCEEDINGS{datadrive1_b,
  author={Singh, Priya and Parida, S K and Chauhan, Baru and Choudhary, Niraj},
  booktitle={2020 21st National Power Systems Conference (NPSC)}, 
  title={Online Voltage Stability Assessment Using Artificial Neural Network considering Voltage stability indices}, 
  year={2020},
  volume={},
  number={},
  pages={1-5},
  doi={10.1109/NPSC49263.2020.9331954}}

@INPROCEEDINGS{datadrive1_b2,
  author={Samy, A.Karuppa and Venkadesan, A.},
  booktitle={2021 5th International Conference on Electrical, Electronics, Communication, Computer Technologies and Optimization Techniques (ICEECCOT)}, 
  title={Online Assessment of Voltage Stability Region using an Artificial Neural Network}, 
  year={2021},
  volume={},
  number={},
  pages={757-761},
  doi={10.1109/ICEECCOT52851.2021.9708047}}

@ARTICLE{datadrive2,
  author={Cui, Mingjian and Li, Fangxing and Cui, Hantao and Bu, Siqi and Shi, Di},
  journal={IEEE Transactions on Power Systems}, 
  title={Data-Driven Joint Voltage Stability Assessment Considering Load Uncertainty: A Variational Bayes Inference Integrated With Multi-CNNs}, 
  year={2022},
  volume={37},
  number={3},
  pages={1904-1915},
  doi={10.1109/TPWRS.2021.3111151}}

@INPROCEEDINGS{datadrive1,
  author={Ali, Arshad and Verma, Kusum},
  booktitle={2024 International Conference on Control, Computing, Communication and Materials (ICCCCM)}, 
  title={PMU Data Driven Machine Learning Regression Method for Real Time Voltage Stability Assessment}, 
  year={2024},
  volume={},
  number={},
  pages={708-713},
  doi={10.1109/ICCCCM61016.2024.11039940}}

@ARTICLE{datadrive3improve,
  author={Su, Heng-Yi and Lai, Chia-Ching},
  journal={IEEE Transactions on Industry Applications}, 
  title={Improving Online Voltage Stability Monitoring in Smart Grids: A Physics-Informed Guided Deep Learning Model}, 
  year={2025},
  volume={61},
  number={2},
  pages={2397-2409},
  doi={10.1109/TIA.2025.3529813}}

@article{WU2024110716,
title = {Approximating voltage stability boundary under high variability of renewables using differential geometry},
journal = {Electric Power Systems Research},
volume = {236},
pages = {110716},
year = {2024},
issn = {0378-7796},
doi = {https://doi.org/10.1016/j.epsr.2024.110716},
url = {https://www.sciencedirect.com/science/article/pii/S0378779624006023},
author = {Dan Wu and Franz-Erich Wolter and Sijia Geng}
}

@ARTICLE{manifold,
  author={Goodwin, Ariel and Maack, Jonathan and Sigler, Devon},
  journal={IEEE Transactions on Power Systems}, 
  title={Power Flow Geometry and Approximation}, 
  year={2025},
  volume={},
  number={},
  pages={1-12},
  doi={10.1109/TPWRS.2025.3612220}}

@ARTICLE{Holomorphic,
  author={Wu, Dan and Wang, Bin},
  journal={IEEE Access}, 
  title={Holomorphic Embedding Based Continuation Method for Identifying Multiple Power Flow Solutions}, 
  year={2019},
  volume={7},
  number={},
  pages={86843-86853},
  doi={10.1109/ACCESS.2019.2925384}}

@article{anderson2025ten,
  title={Ten challenges for mathematical modeling of the green-energy transition},
  author={Anderson, Edward and Ferris, Michael and Philpott, Andrew and Anitescu, Mihai and Cramton, Peter and Geng, Sijia and Green, Richard and Homem-de-Mello, Tito and Huber, Olivier and Lecl{\`e}re, Vincent and others},
  journal={Current Sustainable/Renewable Energy Reports},
  volume={12},
  number={1},
  pages={1--13},
  year={2025},
  publisher={Springer}
}

@phdthesis{wu2017phd,
  author  = "Wu, Dan",
  title   = "Algebraic set preserving mappings for electric power grid models and its applications",
  school  = "Department of Electrical and Computer Engineering, The University of Wisconsin-Madison",
  year    = "2017",
  address = "USA",
  month   = "June"
}

@mastersthesis{Franz_master,
author = {Wolter, Franz-Erich},
year = {1979},
school = {Free University of Berlin},
pages = {},
title = {Interior Metric, shortest paths and loops in Riemannian manifolds with not necessarily smooth boundary}
}

@phdthesis{Franz_phd,
author = {Wolter, Franz-Erich},
year = {1985},
pages = {},
school = {Technical University of Berlin},
title = {Cut loci in bordered and unbordered {Riemannian} manifolds}
}

@ARTICLE{TRANs_DL2,
  author={Su, Heng-Yi and Hong, Hsu-Hui},
  journal={IEEE Transactions on Power Systems}, 
  title={An Intelligent Data-Driven Learning Approach to Enhance Online Probabilistic Voltage Stability Margin Prediction}, 
  year={2021},
  volume={36},
  number={4},
  pages={3790-3793},
  doi={10.1109/TPWRS.2021.3067150}}

@ARTICLE{TRANs_DL3,
  author={Su, Heng-Yi and Lai, Chia-Ching},
  journal={IEEE Transactions on Industry Applications}, 
  title={Dynamic-Deep-Ensemble-Learning Scheme for Probabilistic Voltage Stability Margin Estimation to Enhance Resilient Power Grid Monitoring}, 
  year={2024},
  volume={60},
  number={2},
  pages={2065-2075},
  doi={10.1109/TIA.2023.3288857}}

\end{document}